\newcommand{\myaddcomment}[3]{{\color{#1}{\ensuremath{\langle\!\!\langle}{\bf {#2} :} {#3}\ensuremath{\rangle\!\!\rangle}}}}
\newcommand{\rishabh}[1]{\myaddcomment{LimeGreen}{Rishabh}{#1}}
\newcommand{\JTR}[1]{\myaddcomment{LimeGreen}{Jeff\ensuremath{\rightarrow}Rishabh}{#1}}
\newcommand{\jeff}[1]{\myaddcomment{blue}{Jeff}{#1}}
\newcommand{\RTJ}[1]{\myaddcomment{blue}{Rishabh\ensuremath{\rightarrow}Jeff}{#1}}
\newcommand{\toboth}[1]{\myaddcomment{red}{Rishabh \& Jeff}{#1}}
\newcommand{\rishabh}[1]{}
\newcommand{\JTR}[1]{}
\newcommand{\jeff}[1]{}
\newcommand{\RTJ}[1]{}
\newcommand{\toboth}[1]{}
\providecommand{\doextended}{true}
\newcommand{\extendedv}[1]{#1}
\newcommand{\notextendedv}[1]{}
\newcommand{\extendedv}[1]{}
\newcommand{\notextendedv}[1]{#1}
\author{ {\bf Rishabh Iyer} \\
Dept. of Electrical Engineering\\  
University of Washington\\ 
Seattle, WA-98175, USA
\and 
{\bf Jeff Bilmes} \\ 
Dept. of Electrical Engineering \\  
University of Washington\\ 
Seattle, WA-98175, USA
}
\DeclareMathOperator*{\argmax}{argmax}
\DeclareMathOperator*{\argmin}{argmin}
\newtheorem{theorem}{Theorem}[section]
\newtheorem{lemma}{Lemma}[section]
\newtheorem{definition}{Definition}[section]
\newtheorem{corollary}[theorem]{Corollary}
\newcommand{\lovasz}{Lov\'asz}
\title{Algorithms for Approximate Minimization of the Difference Between
  Submodular Functions, with Applications\thanks{A shorter version of this appeared in Proc. Uncertainty in Artificial Intelligence (UAI), Catalina Islands, 2012}}
\begin{document}
\maketitle

\begin{abstract}


  We extend the work of~\citet{narasimhanbilmes} for minimizing set
  functions representable as a difference between submodular
  functions. Similar to~\cite{narasimhanbilmes}, our new algorithms
  are guaranteed to monotonically reduce the objective function at
  every step. We empirically and theoretically show that the
  per-iteration cost of our algorithms is much less
  than~\cite{narasimhanbilmes}, and our algorithms can be used to
  efficiently minimize a difference between submodular functions under
  various combinatorial constraints, a problem not previously
  addressed.
  We provide computational bounds and a hardness result on the
  multiplicative inapproximability of minimizing the difference
  between submodular functions. We show, however, that it is possible
  to give worst-case additive bounds by providing a polynomial time
  computable lower-bound on the minima.
%
  Finally we show how a number of machine learning problems can be
  modeled as minimizing the difference between submodular functions. 
  We experimentally show the validity of our algorithms by testing them
  on the problem of feature selection with submodular
  cost features.  \looseness-1
\end{abstract}

\section{Introduction}
\label{sec:introduction}

Discrete optimization is important to many areas of machine learning
and recently an ever growing number of problems have been shown to be
expressible as submodular function minimization or maximization
(e.g.,~\cite{kkt03,krause2005near, krause08robust, linacl, linbudget,
  lin2011-speech-corpus-creation}). The class of submodular functions
is indeed special since submodular function minimization is known to
be polynomial time, while submodular maximization, although NP
complete, admits constant factor approximation algorithms. Let $V =
\{1, 2, \cdots, n\}$ refer a ground set, then $f: 2^V \rightarrow
\mathbb{R}$ is said to be submodular if for sets $S, T \subseteq V$,
$f(S) + f(T) \geq f(S \cup T) + f(S \cap T)$
(see~\cite{fujishige2005submodular} for details on submodular,
supermodular, and modular functions). Submodular functions have a
diminishing returns property, wherein the gain of an element in the
context of bigger set is lesser than the gain of that element in the
context of a smaller subset. This property occurs naturally in many
applications in machine learning, computer vision, economics,
operations research, etc. \looseness-1



In this paper, we address the following problem. Given two submodular
functions $f$ and $g$, 
and define $v(X) \triangleq f(X) - g(X)$,
solve the following optimization problem:\looseness-1
\begin{equation} \label{probstat}
\min_{X \subseteq V} [ f(X) - g(X) ]
\equiv
\min_{X \subseteq V} [ v(X) ].
\end{equation}
A number of machine learning problems involve minimization over a
difference between submodular functions. The following are some
examples: \looseness-1
\begin{itemize}

\item \textbf{Sensor placement with submodular costs: } The problem of
  choosing sensor locations $A$ from a given set of possible locations $V$ can
  be modeled \cite{krause2005near,krause2008near}
  by maximizing 
  the mutual information between the chosen variables $A$ and the
  unchosen set $V \backslash A$ (i.e., $f(A) = I(X_A; X_{V \backslash
    A})$). Alternatively, we may wish to maximize the mutual
  information between a set of chosen sensors $X_A$ and a fixed
  quantity of interest $C$ (i.e., $f(A) = I(X_A ; C)$) under the
  assumption that the set of features $X_A$ are conditionally independent given $C$
  \cite{krause2005near}.  These objectives are submodular and thus the
  problem becomes maximizing a submodular function subject to a
  cardinality constraint.  Often, however, there are costs $c(A)$
  associated with the locations that naturally have a diminishing
  returns property. For example, there is typically a discount when
  purchasing sensors in bulk. Moreover, there may be diminished cost
  for placing a sensor in a particular location given placement in certain
  other locations (e.g., the additional equipment needed to install a
  sensor in, say, a precarious environment could be re-used for
  multiple sensor installations in like environments). Hence, along
  with maximizing mutual information, we also want to simultaneously
  minimize the cost and this problem can be addressed by
  minimizing the difference between submodular functions $f(A) -
  \lambda c(A)$ for tradeoff parameter $\lambda$. \looseness-1

\item \textbf{Discriminatively structured graphical models and neural
    computation: } An application suggested in~\cite{narasimhanbilmes}
  and the initial motivation for this problem is to optimize the EAR
  criterion to produce a discriminatively structured graphical model.
  EAR is basically a difference between two mutual information
  functions (i.e., a difference between submodular
  functions). \cite{narasimhanbilmes} shows how classifiers based on
  discriminative structure using EAR can significantly outperform
  classifiers based on generative graphical models.  Note also that
  the EAR measure is the same as ``synergy'' in a neural code
  \cite{brenner2000synergy}, widely used in neuroscience.

\item \textbf{Feature selection: } Given a set of features $X_1, X_2,
  \cdots, X_{|V|}$, the feature selection problem is to find a small
  subset of features $X_A$ that work well when used in a pattern
  classifier. This problem can be modeled as maximizing the mutual
  information $I(X_A; C)$ where $C$ is the class. Note that $I(X_A; C)
  = H(X_A) - H(X_A | C)$ is always a difference between submodular
  functions. Under the na\"{\i}ve Bayes model, this function is
  submodular~\cite{krause2005near}. It is not submodular under general
  classifier models such as support vector machines (SVMs) or neural
  networks.  Certain features, moreover, might be cheaper to use given
  that others are already being computed. For example, if a subset
  $S_i \subseteq V$ of the features for a particular information
  source $i$ are spectral in nature, then once a particular $v \in
  S_i$ is chosen, the remaining features $S_i \setminus \{ v \}$ may
  be relatively inexpensive to compute, due to grouped computational
  strategies such as the fast Fourier transform. Therefore, it might
  be more appropriate to use a submodular cost model $c(A)$.  One such
  cost model might be $c(A) = \sum_i \sqrt{ m(A \cap S_i) }$ where
  $m(j)$ would be the cost of computing feature $j$.  Another might be
  $c(A) = \sum_i c_i \min( |A \cap S_i|, 1 )$ where $c_i$ is the cost of
  source $i$. Both offer diminishing cost for choosing features from
  the same information source. Such a cost model could be useful even
  under the na\"{\i}ve Bayes model, where $I(X_A; C)$ is
  submodular. Feature selection becomes a problem of maximizing
  $I(X_A; C) - \lambda c(A) = H(X_A) - [ H(X_A|C) + \lambda c(A)]$,
  the difference between two submodular functions.\looseness-1

\item \textbf{Probabilistic Inference:} \extendedv{A typical instance of
  probabilistic inference is the following:} We are given a
  distribution $p(x) \propto \exp(-v(x))$ where $x \in \{0,1\}^n$ and
  $v$ is a pseudo-Boolean function \cite{Boros2002155}. It is
  desirable to compute $\argmax_{x \in \{0,1\}^n} p(x)$ which means
  minimizing $v(x)$ over $x$, the most-probable explanation (MPE)
  problem \cite{pearl88}.  If $p$ factors with respect to a graphical
  model of tree-width $k$, then $v(x) = \sum_i v_i(x_C)$ where $C_i$
  is a bundle of indices such that $|C| \leq k+1$ and the sets $\mathcal C = \{ C_i
  \}_i$ form a junction tree, and it might be possible to solve
  inference using dynamic programming. If $k$ is large and/or if
  hypertree factorization does not hold, then approximate inference is
  typically used \cite{wainwright2008graphical}. On the other hand,
  defining $x(X) = \{ x \in \{0,1\}^n : x_i = 1 \text{ whenever } i
  \in X \}$, if the set function $\bar v(X) = v(x(X))$ is submodular,
  then even if $p$ has large tree-width, the MPE problem can be solved
  exactly in polynomial time
  \cite{jegelka2011-inference-gen-graph-cuts}. This, in fact, is the
  basis behind inference in many computer vision models where $v$ is
  often not only submodular but also has limited sized $|C_i|$. For
  example, for submodular $v$ and if $|C_i| \leq 2$ then graph-cuts
  can solve the MPE problem extremely rapidly
  \cite{kolmogorov2004energy} and even some cases with $v$
  non-submodular \cite{kolmogorov2007minimizing}.  An important
  challenge is to consider non-submodular $v$ that can be minimized
  efficiently and for which there are approximation guarantees, a
  problem recently addressed in \cite{jegelka2011-nonsubmod-vision}.
  On the other hand, if $v$ can be expressed as a difference between
  two submodular functions (which it can, see Lemma~\ref{thm1}), or if
  such a decomposition can be computed (which it sometimes can, see
  Lemma~\ref{thm2}), then a procedure to minimize the difference
  between two submodular functions offers new ways to solve
  probabilistic inference. As an example, a large class of rich higher potentials can be expressed as~\cite{gallagher2011inference}:
  \begin{align}
f(x) = \sum_{C \in \mathcal C} w_C \prod_{i \in C} x_i  
  \end{align}
$\mathcal C$ here stands for a set of sets, possibly with higher-order terms (i.e there exist $C \in \mathcal C: |C| > 2$). If $w_C \leq 0, \forall C \in \mathcal C$, then $f$ is submodular. If $|\mathcal C|$ is not large (say polynomial in $n$), we can efficiently find a decomposition into submodular components (which will contain the sets $C \in \mathcal C: w_C \leq 0$) and the supermodular terms (which contain sets $C \in \mathcal C: w_C \geq 0$). These can potentially represent a rich class of potential functions for a number of applications, particularly in vision. 
\end{itemize} 

\extendedv{We note that given a solution to Equation~\ref{probstat},
  we can also minimize the difference between two supermodular
  functions $\min((-g) - (-f))$, maximize the difference between two
  submodular functions $\max (-v) = \max(g-f)$, and maximize the
  difference between two supermodular functions $\max (-v) =
  \max((-f)-(-g))$.}

Previously,~\citet{narasimhanbilmes} proposed an algorithm inspired by
the convex-concave procedure~\cite{yuille2002concave} to address
Equation~\eqref{probstat}.  This algorithm iteratively minimizes a
submodular function by replacing the second submodular function $g$ by
it's modular lower bound. They also show that any set function can be
expressed as a difference between two submodular functions and hence
every set function optimization problem can be reduced to minimizing a
difference between submodular functions. They show that this process
converges to a local minima, however the convergence rate is left as
an open question.  \looseness-1

In this paper, we first describe tight modular bounds on submodular
functions in Section~\ref{sec:modular-upper-lower}, including lower
bounds based on points in the base polytope as used
in~\cite{narasimhanbilmes}, and recent upper bounds first described in a 
result in~\cite{jegelkacvpr}.
%
In section~\ref{sec:subm-superm-proc}, we describe the
submodular-supermodular procedure proposed in~\cite{narasimhanbilmes}.
We further provide a constructive procedure for finding the submodular
functions $f$ and $g$ for any arbitrary set function $v$. Although our
construction is NP hard in general,
we show how for certain classes of set functions $v$, it is possible
to find the decompositions $f$ and $g$ in polynomial time. In
Section~\ref{sec:altern-algor-minim}, we propose two new algorithms
both of which are guaranteed to monotonically reduce the objective at
every iteration and which converge to a local minima. 
Further we note that the per-iteration cost of our algorithms
is in general much less than~\cite{narasimhanbilmes}, and
empirically verify that our algorithms are orders of magnitude faster
on real data. We show that, unlike in~\cite{narasimhanbilmes}, our
algorithms can be extended to easily optimize
equation~\eqref{probstat} under cardinality, knapsack, and matroid
constraints. Moreover, one of our algorithms can actually handle
complex combinatorial constraints, such as spanning trees, matchings,
cuts, etc. Further in Section~\ref{sec:theoretical-results}, we give a
hardness result that there does not exist any polynomial time
algorithm with any polynomial time multiplicative approximation
guarantees unless P=NP, even when it is easy to find or when we are
given the decomposition $f$ and $g$, thus justifying the need for
heuristic methods to solve this problem. We show, however, that it is
possible to get additive bounds by showing polynomial time computable
upper and lower bound on the optima. We also provide computational
bounds for all our algorithms (including the submodular-supermodular
procedure), a problem left open in~\cite{narasimhanbilmes}.
%


Finally we perform a number of experiments on the feature selection
problem under various cost models, and show how our algorithms used to
maximize the mutual information perform better than greedy
selection (which would be near optimal under the na\"{\i}ve Bayes
assumptions) and with less cost.



\section{Modular Upper and Lower bounds}
\label{sec:modular-upper-lower}

The Taylor series approximation of a convex function provides a
natural way of providing lower bounds on such a function. In
particular the first order Taylor series approximation of a convex function
is a lower bound on the function, and is linear in $x$ for a given $y$
and hence given a convex function $\phi$, we have: \looseness-1
\begin{equation}
 \phi(x) \geq \phi(y) + \langle \nabla \phi(y), x - y \rangle.
\end{equation}

Surprisingly, any submodular function has both a tight lower
\cite{edmondspolyhedra} {\bf and} upper bound \cite{jegelkacvpr}, unlike
strict convexity where there is only a tight first order lower
bound. \looseness-1

\subsection{Modular Lower Bounds}
\label{sec:modular-lower-bounds}

Recall that for submodular function $f$, the submodular polymatroid,
base polytope and the sub-differential with respect to a set
$Y$~\cite{fujishige2005submodular} are respectively:
\looseness-1
\begin{gather}
\mathcal P_f = \{ x : x(S) \leq f(S), \forall S \subseteq V \} \\
 \mathcal B_f = \mathcal P_f \cap \{ x : x(V) = f(V) \}  \\
\!\!\!\!\partial f(Y) = \{y \in \mathbb{R}^V : \forall X \subseteq V, f(Y) - y(Y) \leq f(X) - y(X)\} \notag
\end{gather}
The extreme points of this sub-differential are easy to find and
characterize, and can be obtained from a greedy algorithm
(\cite{edmondspolyhedra,fujishige2005submodular}) as follows:
\begin{theorem} (\cite{fujishige2005submodular}, Theorem 6.11)
A point $y$ is an extreme point of $\partial f(Y)$, \textit{iff} there exists a chain $\emptyset = S_0 \subset S_1 \subset \cdots \subset S_n$ with $Y = S_j$ for some $j$, such that $y(S_i \setminus S_{i-1}) = y(S_i) - y(S_{i-1}) = f(S_i) - f(S_{i-1})$.  \looseness-1
\end{theorem}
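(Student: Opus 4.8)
The plan is to prove both directions by passing to the function $h(X) := f(X) - y(X)$, which is submodular because $y$ is modular. Straight from the definition, $y \in \partial f(Y)$ holds if and only if $Y$ is a global minimizer of $h$, and $\partial f(Y)$ is the (pointed, i.e.\ line-free) polyhedron in $\mathbb{R}^V$ cut out by the inequalities $h(X) \ge h(Y)$, equivalently $(\mathbf{1}_X - \mathbf{1}_Y)^\top y \le f(X) - f(Y)$ as $X$ ranges over $2^V$. A feasible point is an extreme point exactly when the subfamily of these constraints that are \emph{tight} at $y$ has rank $n = |V|$. Hence the whole argument reduces to relating the family of tight sets $\mathcal{D} := \{X : h(X) = h(Y)\}$, i.e.\ the minimizers of $h$, to maximal chains through $Y$.

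For sufficiency (chain $\Rightarrow$ extreme point), first I would telescope the defining equalities: since $S_0 = \emptyset$ and each $S_i \setminus S_{i-1}$ is a singleton, summing $y(S_i) - y(S_{i-1}) = f(S_i) - f(S_{i-1})$ gives $y(S_i) = f(S_i)$ for every $i$ (taking $f(\emptyset)=0$), so in particular $y(Y) = f(Y)$ and $h(Y)=0$. I would then show $y \in \mathcal{P}_f$, i.e.\ $y(X) \le f(X)$ for all $X$, by the standard greedy submodularity estimate: order the elements of $X$ consistently with the chain, bound each marginal $y(v)=f(S_i)-f(S_{i-1})$ above by the marginal of $v$ into the corresponding prefix of $X$ (valid since that prefix is contained in $S_{i-1}$), and telescope. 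This gives $h(X) \ge 0 = h(Y)$, so $y \in \partial f(Y)$. Finally the $n$ sets $S_1,\dots,S_n$ are tight, and the successive differences of their constraint normals, $(\mathbf{1}_{S_i}-\mathbf{1}_Y) - (\mathbf{1}_{S_{i-1}}-\mathbf{1}_Y) = \mathbf{1}_{S_i \setminus S_{i-1}}$, are precisely the $n$ standard basis vectors; the tight constraints therefore have rank $n$ and pin down $y$ uniquely, so $y$ is an extreme point.

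For necessity (extreme point $\Rightarrow$ chain), I would use that $\mathcal{D}$, as the minimizer family of the submodular $h$, is a lattice (closed under $\cup$ and $\cap$) containing $Y$. Since $y$ is a vertex, $\{\mathbf{1}_X - \mathbf{1}_Y : X \in \mathcal{D}\}$ spans $\mathbb{R}^n$. A coordinate inspection forces the least and greatest elements of $\mathcal{D}$ to be $\emptyset$ and $V$: if some $i$ lay outside $\bigcup_{X\in\mathcal{D}}X$ (or inside $\bigcap_{X\in\mathcal{D}}X$), then the $i$-th coordinate of every spanning vector would vanish, dropping the rank below $n$. With $\emptyset \in \mathcal{D}$ we also get $\{\mathbf{1}_X : X\in\mathcal{D}\}$ spanning $\mathbb{R}^n$. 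Now group the ground set by the equivalence $i \sim j \iff$ ($i,j$ belong to the same members of $\mathcal{D}$); the incidence vectors are constant on classes, so full rank $n$ forces every class to be a singleton, i.e.\ $\mathcal{D}$ separates all points. By Birkhoff's representation theorem $\mathcal{D}$ is then the lattice of order ideals of a poset on the $n$ elements, which contains a maximal chain $\emptyset = S_0 \subset \cdots \subset S_n = V$ adding one element at a time; since $Y$ is itself an order ideal, a linear extension having $Y$ as a prefix routes the chain through $Y = S_j$. Every $S_i \in \mathcal{D}$ is tight, which is exactly the asserted equality $y(S_i)-y(S_{i-1}) = f(S_i)-f(S_{i-1})$.

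The main obstacle is this last lattice step: converting the linear-algebraic vertex condition (rank-$n$ tight constraints) into the purely combinatorial existence of a length-$n$ chain inside $\mathcal{D}$ that additionally passes through the prescribed $Y$. The forward direction is essentially the classical greedy computation, whose only subtlety is the submodular marginal bound; the reverse direction genuinely needs the dimension-equals-height correspondence for rings of sets together with the fact that order ideals are exactly prefixes of linear extensions, and it is the latter that permits the maximal chain to be chosen through $Y$.
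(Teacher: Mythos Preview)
The paper does not prove this theorem at all: it is quoted verbatim as Theorem~6.11 of \cite{fujishige2005submodular} and used as a black box to define the permutation-based subgradients $h^f_{Y,\sigma}$. There is therefore no in-paper argument to compare against.

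Your proposal is essentially the standard proof of this classical fact, and the overall architecture---reduce to the lattice $\mathcal{D}$ of minimizers of the submodular $h=f-y$, then match the rank-$n$ vertex condition with a length-$n$ chain in $\mathcal{D}$---is sound. Two small points worth tightening. First, the normalization $f(\emptyset)=0$ is not assumed in the paper; without it the telescoping only gives $y(S_i)=f(S_i)-f(\emptyset)$, which is still fine because the subdifferential inequality is about $h(X)-h(Y)$, but the line ``in particular $y(Y)=f(Y)$'' should read $h(Y)=h(\emptyset)$. Second, your appeal to Birkhoff is slightly loose: Birkhoff identifies $\mathcal{D}$ with the order ideals of its poset of \emph{join-irreducibles}, not a priori with a poset on the $n$ ground-set elements. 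The bridge is that once $\mathcal{D}$ separates all points, the sets $D_i:=\bigcap\{X\in\mathcal{D}: i\in X\}$ are $n$ distinct join-irreducibles, so every maximal chain in $\mathcal{D}$ has length exactly $n$; since $Y\in\mathcal{D}$, concatenating a maximal chain from $\emptyset$ to $Y$ with one from $Y$ to $V$ gives the required chain through $Y$. With these clarifications your argument goes through.
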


Let $\sigma$ be a permutation of $V$ and define $S_i^\sigma = \{ \sigma(1),
\sigma(2), \dots, \sigma(i) \}$ as $\sigma$'s chain containing $Y$,
meaning $S_{|Y|}^\sigma = Y$ (we say that $\sigma$'s chain
\underline{contains} $Y$). Then we can define a sub-gradient $h^f_Y$
corresponding to $f$ as:
\begin{equation}
\label{eq:permmod}
 h^f_{Y, \sigma}(\sigma(i)) = 
\begin{cases}
f(S_1^\sigma) & \text{ if } i=1 \\
f(S_i^\sigma) - f(S_{i-1}^\sigma) & \text { otherwise }
\end{cases}. \nonumber
\end{equation}
We get a modular lower bound of $f$ as follows:
\begin{align*}
h^f_{Y, \sigma}(X) \leq f(X), \forall X \subseteq V,
\text{ and } \forall i,
h^f_{Y, \sigma}(S_i^\sigma) = f(S_i^\sigma) ,
\end{align*}
which is parameterized by a set $Y$ and a permutation $\sigma$. Note
$h(X) = \sum_{i \in X} h(i)$, and $h^f_{Y, \sigma}(Y) = f(Y)$. Observe
the similarity to convex functions, where a linear lower bound is
parameterized by a vector $y$. \looseness-1

\subsection{Modular Upper Bounds}
\label{sec:modular-upper-bounds}

For $f$ submodular, \cite{nemhauser1978} established the following:
\looseness-1
\begin{eqnarray} 
\label{nembounds1} f(Y) \leq f(X) - \sum_{j \in X \backslash Y } f(j| X \backslash j) + \sum_{j \in Y \backslash X} f(j| X \cap Y),\nonumber \\
\label{nembounds2} f(Y) \leq f(X) - \sum_{j \in X \backslash Y } f(j| (X \! \cup \! Y) \backslash j) + \sum_{j \in Y \backslash X} f(j | X) \nonumber
\end{eqnarray}
Note that $f(A| B) \triangleq f(A \cup B) - f(B)$ is the gain of
adding $A$ in the context of $B$. These upper bounds in fact
characterize submodular functions, in that a function $f$ is a
submodular function \textit{iff} it follows either of the above
bounds. Using the above, two tight modular upper
bounds~(\cite{jegelkacvpr}) can be defined as follows:\looseness-1
\begin{gather}
\label{modnembounds1} 
\!\!\!f(Y) \leq m^f_{X, 1}(Y) \triangleq f(X) - \!\!\! \sum_{j \in X \backslash Y } f(j| X \backslash j) + \sum_{j \in Y \backslash X} f(j| \emptyset), \notag \\
\label{modnembounds2} 
\!\!\!f(Y) \leq m^f_{X, 2}(Y) \triangleq f(X) - \!\!\! \sum_{j \in X \backslash Y } f(j| V \backslash j) + \sum_{j \in Y \backslash X} f(j| X). \notag
\end{gather}
Hence, this yields two tight (at set $X$) modular upper bounds $m^f_{X,
  1}, m^f_{X, 2}$ for any submodular function $f$.  For briefness,
when referring either one we use $m^f_{X}$. \looseness-1

\section*{\addtocounter{section}{1}\thesection~Submodular-Supermodular~Procedure}
\label{sec:subm-superm-proc}

We now review the submodular-supermodular
procedure~\cite{narasimhanbilmes} to minimize functions expressible as
a difference between submodular functions (henceforth called DS
functions). Interestingly, any set function can be expressed as a DS
function using suitable submodular functions as shown below. The
result was first shown in~\cite{narasimhanbilmes} using the \lovasz{}
extension. We here give a new combinatorial proof, which avoids
Hessians of polyhedral convex functions and which provides a way of
constructing (a non-unique) pair of submodular functions $f$ and $g$
for an arbitrary set function $v$.\looseness-1
\begin{lemma}\cite{narasimhanbilmes}
\label{thm1}
Given any set function $v$, it can be expressed as a DS functions $v(X) = f(X) - g(X), \forall X \subseteq V$ for some submodular functions $f$ and $g$.
\looseness-1
\end{lemma}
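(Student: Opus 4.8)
The plan is to give an explicit combinatorial construction of submodular functions $f$ and $g$ such that $v = f - g$, rather than invoking the \lovasz{} extension as the original proof did. The key observation is that a set function is submodular if and only if all its second-order difference terms are nonpositive, so to repair an arbitrary $v$ into a submodular function I would subtract off a sufficiently ``strongly submodular'' correction term. Concretely, I would seek a submodular function $g$ whose submodularity margin—the amount by which it strictly satisfies the diminishing-returns inequalities—dominates the possible supermodular violations of $v$; then $f \triangleq v + g$ inherits submodularity from $g$ because adding $v$ can only perturb the difference terms by a bounded amount, and $g$ has enough slack to absorb that perturbation.

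**First I would** make the submodularity criterion quantitative. For a set function $w$, submodularity is equivalent to $w(j \mid S) \geq w(j \mid T)$ whenever $S \subseteq T \subseteq V \setminus \{j\}$, i.e. the marginal gains are nonincreasing; equivalently $w(S) + w(T) \geq w(S \cup T) + w(S \cap T)$ for all $S, T$. I would define a quantity measuring how badly $v$ can fail this, for instance $\alpha \triangleq \max_{S,T} \bigl[ v(S \cup T) + v(S \cap T) - v(S) - v(T) \bigr]^+$ over the finitely many pairs $S, T \subseteq V$, which is finite since $2^V$ is finite. The construction then picks $g$ to be any submodular function whose second-order margins all exceed $\alpha$—the simplest choice being a strictly submodular function scaled appropriately, such as $g(X) = \beta\bigl(|X|^2 - c\,|X|\bigr)$ type forms or more cleanly $g(X) = -\beta \sum_{i<j} [\,i,j \in X\,]$-style pairwise terms, tuned so every diminishing-returns inequality for $g$ holds with slack at least $\alpha$.

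**Verifying $f = v + g$ is submodular** is the routine heart of the argument: for any $S, T$, one computes
\begin{align*}
f(S) + f(T) - f(S \cup T) - f(S \cap T)
&= \bigl[ v(S) + v(T) - v(S\cup T) - v(S \cap T) \bigr] \\
&\quad + \bigl[ g(S) + g(T) - g(S\cup T) - g(S \cap T) \bigr],
\end{align*}
where the first bracket is at least $-\alpha$ by the definition of $\alpha$, and the second bracket is at least $\alpha$ by the choice of $g$, so the sum is nonnegative and $f$ is submodular. Since $g$ was constructed to be submodular directly and $v = f - g$ by definition, the decomposition is complete, and the construction is manifestly non-unique (any valid $g$ works), matching the claim in the statement.

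**The main obstacle I expect** is not the logical structure, which is straightforward, but exhibiting a clean closed-form submodular $g$ whose margin is uniformly at least $\alpha$ and writing it so the slack computation is transparent. A convenient candidate is a separable-plus-negative-pairwise function; one must check that its worst-case second-order difference over all pairs $S, T$ (not merely over singleton marginals) is controlled, which requires a short lemma that for such $g$ the minimum margin is attained on two-element symmetric differences. If that uniformity is awkward to establish directly, the fallback is to define the margin of $g$ as $\min_{S,T}\bigl[g(S)+g(T)-g(S\cup T)-g(S\cap T)\bigr]$ and simply scale $g$ by a large enough constant so this minimum exceeds $\alpha$, trading explicitness for a cleaner existence argument while still yielding the constructive pair.
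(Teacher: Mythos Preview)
Your proposal is essentially the paper's proof: quantify the worst supermodular violation of $v$, pick a strictly submodular $g$, scale it so its submodularity slack dominates that violation, and set $f = v + g$. The only substantive difference is in how the violation and slack are measured. The paper works with the diminishing-returns form, defining $\alpha = \min_{j,\,X \subset Y \subseteq V\setminus j}\bigl(v(j\mid X) - v(j\mid Y)\bigr)$ and $\beta$ analogously for $g$, then takes $f' = v + \tfrac{|\alpha'|}{\beta} g$ for any $\alpha' \leq \alpha$; you instead use the lattice form $v(S)+v(T)-v(S\cup T)-v(S\cap T)$. Both are equivalent characterizations of submodularity, but the paper's choice neatly sidesteps the obstacle you flag in your last paragraph: because the marginal-gain formulation is indexed by $X \subsetneq Y$, there is no comparable-pair degeneracy where the slack of $g$ collapses to zero, so one can scale directly by $|\alpha|/\beta$ without first isolating the incomparable pairs or reducing to two-element symmetric differences.
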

\begin{proof}
  Given a set function $v$, we can define $\alpha = \min_{X \subset Y
    \subseteq V \setminus j} v(j|X) - v(j|Y)$\footnote{We denote
    $j,X,Y : X \subset Y \subseteq V \setminus \{ j \}$ by $X \subset
    Y \subseteq V \setminus j$.}. Clearly $\alpha < 0$, since
  otherwise $v$ would be submodular. Now consider any (strictly)
  submodular function $g$, i.e., one having $\beta = \min_{X \subset Y
    \subseteq V \setminus j} g(j|X) - g(j|Y) > 0$. Define $f'(X) =
  v(X) + \frac{|\alpha'|}{\beta}g(X)$ with any $\alpha' \leq \alpha$. Now
  it is easy to see that $f'$ is submodular since $\min_{X \subset Y
    \subseteq V \setminus j} f'(j|X) - f'(j|Y) \geq \alpha + |\alpha'| \geq
  0$. Hence $v(X) = f'(X) - \frac{|\alpha'|}{\beta}g(X)$, is a
  difference between two submodular functions. \looseness-1
\end{proof}
The above proof requires the computation of $\alpha$ and $\beta$ which
has, in general, exponential complexity.  Using the construction
above, however, it is easy to find the decomposition $f$ and $g$ under
certain conditions on $v$.
\begin{lemma} 
\label{thm2} 
If $\alpha$ or at least a lower bound on $\alpha$ for any set function
$v$ can be computed in polynomial time, functions $f$ and $g$
corresponding to $v$ can obtained in polynomial time.
\end{lemma}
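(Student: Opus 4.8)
The plan is to observe that the construction in the proof of Lemma~\ref{thm1} is already polynomial in every step \emph{except} the evaluation of $\alpha$, so that supplying $\alpha$ (or any lower bound) as a black box immediately turns that construction into a polynomial-time procedure. The first and key point is that the auxiliary submodular function $g$ is ours to choose, so we fix it once and for all to be a simple strictly submodular function whose second-difference bound $\beta$ is known in closed form. A convenient choice is $g(X) = -\tfrac{1}{2}\lvert X\rvert^2$: a one-line computation gives $g(j \mid X) = -\lvert X\rvert - \tfrac{1}{2}$, so that for any $X \subset Y \subseteq V \setminus j$ we have $g(j \mid X) - g(j \mid Y) = \lvert Y\rvert - \lvert X\rvert \geq 1$, whence $g$ is strictly submodular with $\beta = 1$, and $g$ can be evaluated on any set in $O(n)$ time. (Any separable concave function of cardinality works equally well and keeps $\beta$ explicit.)

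Next I would invoke the hypothesis: assume we can compute in polynomial time a value $\alpha'$ with $\alpha' \leq \alpha$ (a lower bound is all the construction needs, which is exactly why the lemma is phrased in terms of a lower bound). Setting $f'(X) = v(X) + \tfrac{|\alpha'|}{\beta}\, g(X)$, the same estimate as in Lemma~\ref{thm1} shows that for every $j, X, Y$ with $X \subset Y \subseteq V \setminus j$,
\[
f'(j \mid X) - f'(j \mid Y) \;\geq\; \alpha' + \tfrac{|\alpha'|}{\beta}\,\beta \;=\; \alpha' + |\alpha'| \;=\; 0 ,
\]
using $v(j\mid X) - v(j\mid Y) \geq \alpha \geq \alpha'$, $g(j\mid X) - g(j\mid Y) \geq \beta$, and $\alpha' \leq \alpha < 0$. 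Hence $f'$ is submodular and $v = f' - \tfrac{|\alpha'|}{\beta}\, g$ exhibits $v$ as a difference of the two submodular functions $f'$ and $\tfrac{|\alpha'|}{\beta}\, g$.

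It then remains only to argue that these are \emph{obtained} in polynomial time, which I would interpret as producing value oracles that run in polynomial time rather than tabulating all $2^n$ values: each component can be evaluated at any $X$ with one call to the oracle for $v$ plus an $O(n)$ evaluation of $g$, and the scalar $\tfrac{|\alpha'|}{\beta}$ is computed once from the polynomial-time-available $\alpha'$. The main obstacle to be honest about is that the only genuinely expensive quantity in the whole pipeline is $\alpha$ itself, whose naive evaluation ranges over exponentially many triples $(j,X,Y)$; the content of the lemma is precisely to isolate this as the \emph{sole} bottleneck, so the proof is essentially a bookkeeping argument confirming that a fixed strictly-submodular $g$ with explicit $\beta$ removes every other source of intractability, with the secondary point that a poly-time lower bound $\alpha' \leq \alpha$ suffices in place of $\alpha$.
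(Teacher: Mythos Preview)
Your argument is correct and follows the same strategy as the paper: fix a concrete strictly submodular $g$ whose second-difference lower bound $\beta$ is known in closed form, then plug the polynomial-time value $\alpha'\le\alpha$ into the construction of Lemma~\ref{thm1}. The only difference is the particular $g$: the paper takes $g(X)=\sqrt{|X|}$ and computes $\beta=2\sqrt{n-1}-\sqrt{n}-\sqrt{n-2}$, whereas you take $g(X)=-\tfrac{1}{2}|X|^2$ and get the cleaner $\beta=1$; the paper in fact remarks that any concave function of $|X|$ works, which covers your choice.
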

\begin{proof}
  Define $g$ as $g(X) = \sqrt{|X|}$. Then
  $\beta 
  = \min_{X \subset Y \subseteq V \setminus j} \sqrt{|X| + 1} - \sqrt{|X|} -
  \sqrt{|Y| + 1} + \sqrt{|Y|} = \min_{X \subset V \setminus j} \sqrt{|X| + 1} -
  \sqrt{|X|} - \sqrt{|X| + 2} + \sqrt{|X| + 1} = 2\sqrt{n-1} -
  \sqrt{n} - \sqrt{n-2}$. The last inequality follows since 
  the smallest difference in gains will occur at $|X| = n-2$. Hence
  $\beta$ is easily computed, and given a lower bound on
  $\alpha$, from Lemma~\ref{thm1} the decomposition
  can be obtained in polynomial time. A similar argument holds for $g$
  being other concave functions over $|X|$. \looseness-1
\end{proof}

\begin{algorithm}[tb]
\caption{The submodular-supermodular (SubSup) procedure~\cite{narasimhanbilmes}}
\begin{algorithmic}[1]
\STATE $X^0 = \emptyset$ ; $t \gets 0$ ;
\WHILE{not converged (i.e., $(X^{t+1} \neq X^t)$)}
\STATE Randomly choose a permutation $\sigma^t$ whose chain contains the set $X^t$.
\STATE $X^{t+1}:= \argmin_{X} f(X) - h^g_{X^t,\sigma^t}(X) $
\STATE $t \leftarrow t+1$
\ENDWHILE
\end{algorithmic}
\label{alg:ssp}
\end{algorithm}
The submodular supermodular (SubSup) procedure is given in
Algorithm~\ref{alg:ssp}.  At every step of the algorithm, we minimize
a submodular function which can be performed in strongly polynomial
time~\cite{orlin2009faster, schrijver2000combinatorial}
although the best known complexity is $O(n^5 \eta + n^6)$ where $\eta$
is the cost of a function evaluation.
Algorithm~\ref{alg:ssp}
is guaranteed to converge to a local minima and moreover the
algorithm monotonically decreases the function objective at every
iteration, as we show below.
\begin{lemma}\cite{narasimhanbilmes} \looseness-1
Algorithm~\ref{alg:ssp} is guaranteed to decrease the objective function at every iteration. Further, the algorithm is guaranteed to converge to a local minima by checking at most $O(n)$ permutations at every iteration.
\end{lemma}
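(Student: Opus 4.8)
The plan is to prove the two assertions separately, both resting on the fact, established in Section~\ref{sec:modular-lower-bounds}, that $h^g_{X^t,\sigma^t}$ is a \emph{tight} modular lower bound of $g$: it satisfies $h^g_{X^t,\sigma^t}(X)\le g(X)$ for every $X\subseteq V$, with equality $h^g_{X^t,\sigma^t}(S_i^{\sigma^t})=g(S_i^{\sigma^t})$ at every set of the chain of $\sigma^t$, in particular at $X^t$ itself. First I would establish monotone descent. Since $h^g_{X^t,\sigma^t}$ is modular and $f$ is submodular, the surrogate $f-h^g_{X^t,\sigma^t}$ is submodular, so the minimization defining $X^{t+1}$ is a genuine submodular minimization and $X^{t+1}$ is well defined. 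Because $h^g_{X^t,\sigma^t}\le g$ pointwise, the surrogate dominates the objective, $f(X)-h^g_{X^t,\sigma^t}(X)\ge f(X)-g(X)=v(X)$ for all $X$, while tightness at $X^t$ gives $f(X^t)-h^g_{X^t,\sigma^t}(X^t)=v(X^t)$. Chaining these,
\[
v(X^{t+1})\le f(X^{t+1})-h^g_{X^t,\sigma^t}(X^{t+1})\le f(X^t)-h^g_{X^t,\sigma^t}(X^t)=v(X^t),
\]
where the middle inequality is the optimality of $X^{t+1}$ for the surrogate. Adopting the convention of keeping $X^{t+1}=X^t$ whenever the surrogate minimum already equals $v(X^t)$, every non-terminating step strictly decreases the objective, proving the first claim.

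For the second claim I would exploit the fact that the surrogate is tight at \emph{all} sets of the chain, not just at $X^t$. The key observation is that a permutation $\sigma$ whose chain contains $X^t$ (so $S_{|X^t|}^{\sigma}=X^t$) makes the surrogate tight at the two neighbours $X^t\setminus\{\sigma(|X^t|)\}$ and $X^t\cup\{\sigma(|X^t|+1)\}$, since these are exactly $S_{|X^t|-1}^{\sigma}$ and $S_{|X^t|+1}^{\sigma}$. Hence, by choosing for each element $j\in V$ a permutation that places $j$ immediately before or after the $X^t$-boundary, I can make the surrogate tight at the corresponding Hamming-distance-one neighbour, namely $X^t\cup\{j\}$ when $j\notin X^t$ and $X^t\setminus\{j\}$ when $j\in X^t$. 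This requires at most $n$ permutations, giving the claimed $O(n)$ bound per iteration.

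Local optimality then follows by running the chaining argument backwards. If none of these $O(n)$ surrogate minimizations improves on $X^t$, then for the permutation $\sigma$ making the surrogate tight at a neighbour $Y$ we obtain
\[
v(X^t)\le f(X^{t+1})-h^g_{X^t,\sigma}(X^{t+1})\le f(Y)-h^g_{X^t,\sigma}(Y)=v(Y),
\]
so $v(X^t)\le v(Y)$ for every Hamming-distance-one neighbour $Y$; that is, $X^t$ is a local minimum. Finiteness of the value set $\{v(X):X\subseteq V\}$ together with strict descent at every non-terminating step forces termination after finitely many iterations, at such a local minimum.

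I expect the main obstacle to be the bookkeeping in the $O(n)$ argument. One must verify carefully that making the lower bound tight at a neighbour genuinely converts the surrogate value there into $v(Y)$, and—more delicately—that the surrogate minimizer $X^{t+1}$, which is a \emph{global} and possibly distant minimizer of $f-h^g_{X^t,\sigma}$, can only undercut these tight neighbour values. It is precisely this global optimality that lets failure to improve across the whole $O(n)$ family certify local optimality of $X^t$, rather than mere stationarity with respect to a single permutation; making this distinction rigorous is the crux of the proof.
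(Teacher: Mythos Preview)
Your proposal is correct and follows essentially the same route as the paper's own proof: the monotone-descent chain $v(X^{t+1})\le f(X^{t+1})-h^g_{X^t,\sigma^t}(X^{t+1})\le f(X^t)-h^g_{X^t,\sigma^t}(X^t)=v(X^t)$ is identical, and your local-optimality argument---varying the element at position $|X^t|$ (respectively $|X^t|+1$) of $\sigma$ to make the lower bound tight at each $X^t\setminus\{j\}$ (respectively $X^t\cup\{j\}$)---is exactly the mechanism the paper invokes, only spelled out in more detail. Your added remarks on termination via strict descent and on why global optimality of $X^{t+1}$ is what makes the $O(n)$ certificate work are sound elaborations that the paper leaves implicit.
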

\notextendedv{Due to space constraints, we omit the proof of this lemma which is in any case described in~\cite{narasimhanbilmes, extended}.}
\extendedv{\begin{proof}
The objective reduces at every
  iteration since:
\begin{align}
f(X^{t+1}) - g(X^{t+1}) &\overset{a}{\leq}& f(X^{t+1}) - h^g_{X^t,\sigma^t}(X^{t+1}) \nonumber \\	
			&\overset{b}{\leq}& f(X^t) - h^g_{X^t,\sigma^t}(X^{t}) \nonumber \\
			&\overset{c}{=}& f(X^t) - g(X^{t}) \nonumber
\end{align}
Where (a) follows since $h^g_{X^t,\sigma^t}(X^{t+1}) \leq g(X^{t+1})$,
and (b) follows since $X^{t+1}$ is the minimizer of $f(X) -
h^g_{X^t,\sigma^t}(X)$, and (c) follows since
$h^g_{X^t,\sigma^t}(X^{t}) = g(X^t)$ from the tightness of the modular
lower bound.

Further note that, if there is no improvement in the function value by considering $O(n)$ permutations 
each with different elements at $\sigma^t(|X^t| - 1)$ and $\sigma^t(|X^t| + 1)$,  
then this is equivalent to a local minima condition on $v$ since $h^g_{X^t,\sigma^t}(S_{|X^t| + 1}^\sigma) = f(S_{|X^t| + 1}^\sigma)$ 
and $h^g_{X^t,\sigma^t}(S_{|X^t| - 1}^\sigma) = f(S_{|X^t| - 1}^\sigma)$.
\end{proof}}

Algorithm~\ref{alg:ssp} requires performing a submodular function
minimization at every iteration which while polynomial in $n$ is (due
to the complexity described above) not practical for large problem
sizes.  So while the algorithm reaches a local minima, it can be
costly to find it. A desirable result, therefore, would be to develop
new algorithms for minimizing DS functions, where the new
algorithms have the same properties as the SubSup procedure but are
much faster in practice.  We give this in the following sections.

\section{Alternate algorithms for minimizing DS functions} 
\label{sec:altern-algor-minim}

In this section we propose two new algorithms to minimize DS
functions, both of which are guaranteed to monotonically reduce the
objective at every iteration and converge to local minima. We briefly
describe these algorithms in the subsections below.

\subsection{The supermodular-submodular (SupSub) procedure}
\label{sec:superm-subm-sups}

In the submodular-supermodular procedure we iteratively minimized
$f(X) - g(X)$ by replacing $g$ by it's modular lower bound at every
iteration. We can instead replace $f$ by it's modular upper bound as
is done in Algorithm~\ref{alg:supsub}, which leads to the {\em
  supermodular-submodular} procedure.

\begin{algorithm}[h]
\caption{The supermodular-submodular (SupSub) procedure}
\begin{algorithmic}[1]
\STATE $X^0 = \emptyset$ ; $t \gets 0$ ;
\WHILE{not converged (i.e., $(X^{t+1} \neq X^t)$)}
\STATE $X^{t+1}:= \argmin_X m^f_{X^t}(X) - g(X)$ \label{line:submaxline}
\STATE $t \leftarrow t+1$
\ENDWHILE
\end{algorithmic}
\label{alg:supsub}
\end{algorithm}

In the SupSub procedure, at every step we perform submodular
maximization which, although NP complete to solve exactly, admits a
number of fast constant factor approximation
algorithms~\cite{feldman2012optimal, fiege2011submodmax}. Notice that
we have two modular upper bounds and hence there are a number of ways
we can choose between them. One way is to run both maximization
procedures with the two modular upper bounds at every iteration in
parallel, and choose the one which is better. Here by better we mean
the one in which the function value is lesser. Alternatively we can
alternate between the two modular upper bounds by first maximizing the
expression using the first modular upper bound, and then maximize the
expression using the second modular upper bound. Notice that since we
perform approximate submodular maximization at every iteration, we are
not guaranteed to monotonically reduce the objective value at every
iteration. If, however, we ensure that at every iteration we take the
next step only if the objective $v$ does not increase, we will restore
monotonicity at every iteration. Also, in some cases we converge to
local optima as shown in the following theorem.
\begin{theorem} \looseness-1 Both variants of the
  supermodular-submodular procedure (Algorithm~\ref{alg:supsub})
  monotonically reduces the objective value at every
  iteration. Moreover, assuming a submodular maximization
  procedure in line~\ref{line:submaxline} that 
  reaches a local maxima of $m^f_{X^t}(X) - g(X)$, then
  if Algorithm~\ref{alg:supsub} does not
  improve under both modular upper bounds then it reaches
  a local optima of $v$.
\end{theorem}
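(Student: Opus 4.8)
The plan is to prove the two claims separately: first monotonicity, then the local-optimality condition. For monotonicity, the key observation is that the modular upper bound $m^f_{X^t}$ is \emph{tight at} $X^t$, i.e., $m^f_{X^t}(X^t) = f(X^t)$, while $m^f_{X^t}(X) \geq f(X)$ for all $X$. The natural chain of inequalities would run
\begin{align*}
f(X^{t+1}) - g(X^{t+1}) &\overset{a}{\leq} m^f_{X^t}(X^{t+1}) - g(X^{t+1}) \\
                        &\overset{b}{\leq} m^f_{X^t}(X^{t}) - g(X^{t}) \\
                        &\overset{c}{=} f(X^t) - g(X^{t}),
\end{align*}
where (a) uses $m^f_{X^t}(X^{t+1}) \geq f(X^{t+1})$, (c) uses tightness at $X^t$, and (b) would follow if $X^{t+1}$ \emph{exactly} maximized $g(X) - m^f_{X^t}(X)$. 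The subtlety, and what I expect to be the main obstacle, is that line~\ref{line:submaxline} only performs \emph{approximate} submodular maximization, so step (b) is not automatic: the approximate maximizer need not beat the incumbent $X^t$. The resolution, already flagged in the text preceding the theorem, is that we adopt the safeguard of taking the step only when $v$ does not increase; so I would state this explicitly as the mechanism that restores (b). Concretely, since $X^t$ is always a feasible candidate with value $g(X^t) - m^f_{X^t}(X^t)$, the algorithm either finds an $X^{t+1}$ with $m^f_{X^t}(X^{t+1}) - g(X^{t+1}) \leq m^f_{X^t}(X^{t}) - g(X^{t})$ or else sets $X^{t+1} = X^t$ and terminates, and in both cases monotonicity holds. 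This argument is uniform over which of the two upper bounds $m^f_{X,1}, m^f_{X,2}$ is used, and over the parallel/alternating variants, since each individual step reduces $v$.

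For the local-optimality claim I would argue by contradiction. Suppose the algorithm does not improve under \emph{either} modular upper bound at the current iterate $X^t$, yet $X^t$ is not a local optimum of $v$; then there is some neighbor $X'$ (differing from $X^t$ by a single element) with $v(X') < v(X^t)$. The goal is to show such an $X'$ would have been detected as an improving move by at least one of the two bounds, contradicting stagnation. The key structural fact to exploit is the tightness of the bounds at $X^t$: since $m^f_{X^t,k}(X^t) = f(X^t)$ for $k=1,2$, the surrogate objective $m^f_{X^t}(X) - g(X)$ agrees with $v$ at $X^t$; I would then check that for single-element perturbations the two bounds $m^f_{X^t,1}$ and $m^f_{X^t,2}$ together sandwich the true gain $f(\cdot\,|\,\cdot)$ tightly enough that an improving neighbor of $v$ is also improving for one of the surrogates. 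This is where the two complementary bounds from Section~\ref{sec:modular-upper-bounds} are essential: the first bound $m^f_{X,1}$ uses gains $f(j \mid \emptyset)$ for added elements (overestimating additions) while $m^f_{X,2}$ uses $f(j \mid X)$ (underestimating), and symmetrically for removed elements, so between them they correctly capture the sign of the marginal change. Under the hypothesis that the maximization subroutine reaches a local maximum of the surrogate, a stationary point of both surrogates then forces stationarity of $v$.

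The main obstacle, as noted, is reconciling the \emph{approximate} nature of the submodular maximization with the \emph{exact} step-(b) inequality; I would handle this cleanly by invoking the monotonicity safeguard rather than any approximation guarantee, so that the proof of the first claim never relies on optimality of the maximizer. The secondary delicate point is making precise what ``does not improve under both modular upper bounds'' means in terms of local maxima of the surrogates, and verifying that a single-element improving direction for $v$ must register in at least one surrogate; I would reduce this to a case analysis on whether $X'$ is obtained from $X^t$ by adding or deleting an element, using the exact form of $m^f_{X,1}$ and $m^f_{X,2}$ together with their tightness at $X^t$.
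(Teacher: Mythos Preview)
Your proposal is correct and mirrors the paper's proof: the same three-line chain for monotonicity (with step (b) justified by the ``only accept if $v$ does not increase'' safeguard rather than any approximation ratio), and the same add/delete case analysis for local optimality. The one point the paper makes sharper than your sketch is that the two upper bounds are not merely ``tight enough'' at neighbors but \emph{exact}: $m^f_{X^t,1}(X^t \setminus j) = f(X^t) - f(j \mid X^t \setminus j) = f(X^t \setminus j)$ and $m^f_{X^t,2}(X^t \cup j) = f(X^t) + f(j \mid X^t) = f(X^t \cup j)$, which immediately converts surrogate local-optimality into local-optimality of $v$ without any sandwiching argument.
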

\begin{proof}
For either modular upper bound, we have:
\begin{align}
f(X^{t+1}) - g(X^{t+1}) 
&\overset{a}{\leq} m^f_{X^t}(X^{t+1}) -  g(X^{t+1}) \nonumber \\	
&\overset{b}{\leq} m^f_{X^t}(X^{t}) -  g(X^{t}) \nonumber \\
&\overset{c}{=} f(X^t) - g(X^{t}), \nonumber
\end{align}
where (a) follows since $f(X^{t+1}) \leq m^f_{X^t}(X^{t+1})$, and
(b) follows since we assume that we take the next step only if the objective value does not increase and (c) follows since $m^f_{X^t}(X^{t}) = f(X^t)$ from the
tightness of the modular upper bound.

To show that this algorithm converges to a local minima, we assume that the submodular maximization
  procedure in line~\ref{line:submaxline} converges to a local maxima. Then observe that
if the objective value does not decrease in an iteration under
both upper bounds, it implies
that $m^f_{X^t}(X^t) - g(X^t)$ is already a local optimum
in that (for both upper bounds) we have $m^f_{X^t}(X^t \cup j) - g(X^t \cup j)
\geq m^f_{X^t}(X^t) - g(X^t), \forall j \notin X^t$ and $m^f_{X^t}(X^t
\backslash j) - g(X^t \backslash j) \geq m^f_{X^t}(X^t) - g(X^t),
\forall j \in X^t$.
Note that
$m^f_{X^t, 1}(X^t
\backslash j) = f(X^t) - f(j | X^t \backslash j) = f(X^t \backslash j)$ and $m^f_{X^t, 2}(X^t \cup j) = f(X^t) + f(j | X^t) = f(X^t \cup j)$  
and hence if both modular upper bounds are at a local optima, it implies 
$f(X^t) - g(X^t) = m^f_{X^t, 1}(X^t) - g(X^t) \leq m^f_{X^t, 1}(X^t \backslash j) - g(X^t \backslash j) = f(X^t \backslash j) - g(X^t \backslash j)$. Similarly $f(X^t) - g(X^t) = m^f_{X^t, 2}(X^t) - g(X^t) \leq m^f_{X^t, 2}(X^t \cup j) - g(X^t \cup j) = f(X^t \cup j) - g(X^t \cup j)$.
Hence $X^t$ is a local optima for $v(X) = f(X) - g(X)$, since $v(X^t) \leq v(X^t \cup j)$ and $v(X^t) \leq v(X^t \backslash j)$.
\end{proof}

To ensure that we take the largest step at each iteration, we can use
the recently proposed tight (1/2)-approximation algorithm
in~\cite{feldman2012optimal} for unconstrained non-monotone submodular
function maximization --- this is the best possible in polynomial time
for the class of submodular functions independent of the P=NP
question. The algorithm is a form of bi-directional randomized greedy
procedure and, most importantly for practical considerations, is
linear time~\cite{feldman2012optimal}. In practice we just use a combination of a form of a simple greedy procedure, and the bi-directional randomized algorithm, by picking the best amongst the two at every iteration. Since the randomized greedy algorithm is $1/2$ approximate, the combination of the two procedures also will be $1/2$ approximate.

Lastly, note that this algorithm is closely related to a local search
heuristic for submodular maximization \cite{fiege2011submodmax}. In
particular, if instead of using the greedy algorithm entirely at every
iteration, we take only one local step, we get a local search
heuristic. Hence, via the SupSub procedure, we may take larger steps
at every iteration as compared to a local search heuristic.

\subsection{The modular-modular (ModMod) procedure}
\label{sec:modul-modul-proc}

The submodular-supermodular procedure and the supermodular-submodular
procedure were obtained by replacing $g$ by it's modular lower bound
and $f$ by it's modular upper bound respectively. We can however
replace both of them by their respective modular bounds, as is done in
Algorithm~\ref{alg:modmod}.
\begin{algorithm}[h]
\caption{Modular-Modular (ModMod) procedure}
\begin{algorithmic}[1]
\STATE $X^0 = \emptyset$; $t \gets 0$ ;
\WHILE{not converged (i.e., $(X^{t+1} \neq X^t)$)}
\STATE Choose a permutation $\sigma^t$ whose chain contains the set $X^t$.
\STATE $X^{t+1}:= \argmin_X  m^f_{X^t}(X) - h^g_{X^t,\sigma^t}(X)$
\STATE $t \leftarrow t+1$
\ENDWHILE
\end{algorithmic}
\label{alg:modmod}
\end{algorithm}

In this algorithm at every iteration we minimize only a modular
function which can be done in $O(n)$ time, so this is extremely easy
(i.e., select all negative elements for the smallest minimum, or all
non-positive elements for the largest minimum). Like before, since we
have two modular upper bounds, we can use any of the variants
discussed in the subsection above.  Moreover, we are still guaranteed
to monotonically decrease the objective at every iteration and
converge to a local minima.\looseness-1
\begin{theorem}
  Algorithm~\ref{alg:modmod} monotonically decreases the function
  value at every iteration. If the function value does not
  increase on checking $O(n)$ different permutations with different
  elements at adjacent positions and with both modular upper bounds,
  then we have reached a local minima of $v$.
\end{theorem}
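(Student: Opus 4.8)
The plan is to follow the template of the two preceding proofs almost verbatim, since the ModMod surrogate $m^f_{X^t}(X) - h^g_{X^t,\sigma^t}(X)$ simply stacks the modular upper bound on $f$ used in SupSub on top of the modular lower bound on $g$ used in SubSup. Monotonicity will be the easy half, obtained from a single sandwich of tight bounds; the local-minimum certificate will be the substantive part, and it will come from pairing, for each neighbor of $X^t$, the \emph{one} upper bound that is tight at that neighbor with a permutation whose chain is also tight there.

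For monotonicity I would exhibit the chain
\begin{align}
f(X^{t+1}) - g(X^{t+1})
&\overset{a}{\leq} m^f_{X^t}(X^{t+1}) - h^g_{X^t,\sigma^t}(X^{t+1}) \nonumber \\
&\overset{b}{\leq} m^f_{X^t}(X^{t}) - h^g_{X^t,\sigma^t}(X^{t}) \nonumber \\
&\overset{c}{=} f(X^t) - g(X^{t}), \nonumber
\end{align}
where (a) combines $f(X^{t+1}) \leq m^f_{X^t}(X^{t+1})$ (modular upper bound) with $h^g_{X^t,\sigma^t}(X^{t+1}) \leq g(X^{t+1})$ (modular lower bound); (b) holds because $X^{t+1}$ is the exact minimizer of the modular surrogate; and (c) uses simultaneous tightness, $m^f_{X^t}(X^t)=f(X^t)$ and $h^g_{X^t,\sigma^t}(X^t)=g(X^t)$. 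This argument is indifferent to which of the two upper bounds is used, so monotonicity follows immediately.

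For the local-minimum claim I would exploit that the upper bounds are each tight at exactly one type of neighbor while the lower bound is tight at every set of $\sigma$'s chain: $m^f_{X^t,1}(X^t \setminus j) = f(X^t \setminus j)$ for $j \in X^t$, and $m^f_{X^t,2}(X^t \cup j) = f(X^t \cup j)$ for $j \notin X^t$. Since $\sigma$'s chain contains $X^t$, its chain-neighbors are $S_{|X^t|-1}^\sigma = X^t \setminus \{\sigma(|X^t|)\}$ and $S_{|X^t|+1}^\sigma = X^t \cup \{\sigma(|X^t|+1)\}$, so by ranging over the $O(n)$ permutations placing each candidate element at position $|X^t|$ (to realize any deletion) or position $|X^t|+1$ (to realize any addition), I can make $h^g_{X^t,\sigma}$ tight at any single-element neighbor of $X^t$. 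Non-improvement for a given permutation/bound pair means the modular minimizer is $X^t$ itself, i.e.\ $X^t$ minimizes that surrogate; hence for $j \notin X^t$ with $\sigma(|X^t|+1)=j$ and the bound $m^f_{X^t,2}$ we get $m^f_{X^t,2}(X^t) - h^g_{X^t,\sigma}(X^t) \leq m^f_{X^t,2}(X^t \cup j) - h^g_{X^t,\sigma}(X^t \cup j)$, and the three tightness identities collapse this to $v(X^t) \leq v(X^t \cup j)$. Symmetrically, for $j \in X^t$ with $\sigma(|X^t|)=j$ and the bound $m^f_{X^t,1}$ one obtains $v(X^t) \leq v(X^t \setminus j)$. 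Together these are exactly the local-optimality conditions for $v$.

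The main obstacle is the bookkeeping in this last step: I must be careful to invoke, for each neighbor, the upper bound that is tight at that specific neighbor \emph{together} with a permutation whose chain is tight there, so that all three surrogate pieces coincide with their true values at once; using the wrong upper bound (e.g.\ $m^f_{X^t,1}$ for an addition) would leave a slack term and break the collapse to $v$. Once the correct pairing is pinned down, everything reduces to rewriting the non-improvement inequality via tightness, and no inequality beyond those already established for SubSup and SupSub is required.
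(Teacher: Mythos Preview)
Your proposal is correct and follows essentially the same approach as the paper: the monotonicity chain is identical, and the local-minimum argument likewise proceeds by pairing, for each single-element neighbor of $X^t$, the appropriate modular upper bound on $f$ (tight at that neighbor) with a permutation whose chain passes through that neighbor (making $h^g$ tight there as well). If anything, your write-up is more explicit than the paper's about which upper bound must be invoked for additions versus deletions; the paper simply asserts that ``since we consider both modular upper bounds, we correspondingly consider every choice of $f(X^t \cup j)$ and $f(X^t \backslash j)$.''
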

\begin{proof}
Again we can use similar reasoning as the earlier proofs and observe that:
\begin{align}
f(X^{t+1}) - g(X^{t+1}) 
&\leq m^f_{X^t}(X^{t+1}) -  h^g_{X^t,\sigma^t}(X^{t+1}) \nonumber \\	
&\leq m^f_{X^t}(X^t) -  h^g_{X^t,\sigma^t}(X^t) \nonumber \\
&= f(X^t) - g(X^t) \nonumber
\end{align}

We see that considering $O(n)$ permutations each with different
elements at $\sigma^t(|X^t| - 1)$ and $\sigma^t(|X^t| + 1)$, we
essentially consider all choices of $g(X^t \cup j)$ and $g(X^t
\backslash j)$, since $h^g_{X^t,\sigma^t}(S_{|X^t| + 1}) = f(S_{|X^t|
  + 1})$ and $h^g_{X^t,\sigma^t}(S_{|X^t| - 1}) = f(S_{|X^t| -
  1})$. Since we consider both modular upper bounds, we
correspondingly consider every choice of $f(X^t \cup j)$ and $f(X^t
\backslash j)$. Note that at convergence we have that $m^f_{X^t}(X^t)
- h^g_{X^t,\sigma^t}(X^t) \leq m^f_{X^t}(X) - h^g_{X^t,\sigma^t}(X),
\forall X \subseteq V$ for $O(n)$ different permutations and both
modular upper bounds. Correspondingly we are guaranteed that (since
the expression is modular) $\forall j \notin X^t, v(j| X^t) \geq 0$
and $\forall j \in X^t, v(j| X^t \backslash j) \geq 0$, where $v(X) =
f(X) - g(X)$. Hence the algorithm converges to a local minima.
\end{proof}

An important question is the choice of the permutation $\sigma^t$ at
every iteration $X^t$. We observe experimentally that the quality of
the algorithm depends strongly on the choice of permutation. Observe
that $f(X) - g(X) \leq m^f_{X^t}(X) - h^g_{X^t,\sigma^t}(X)$, and
$f(X^t) - g(X^t) = m^f_{X^t}(X^t) - h^g_{X^t,\sigma^t}(X^t)$. Hence,
we might obtain the greatest local reduction in the value of $v$ by
choosing permutation $\sigma^* \in \argmin_{\sigma} \min_X
(m^f_{X^t}(X) - h^g_{X^t,\sigma^t}(X))$, or the one which maximizes
$h^g_{X^t,\sigma^t}(X)$. We in fact might expect that choosing
$\sigma^t$ ordered according to greatest gains of $g$, with respect to
$X^t$, we would achieve greater descent at every iteration. Another
choice is to choose the permutation $\sigma$ based on the ordering of
gains of $v$ (or even $m^f_{X^t}$).  Through the former we are
guaranteed to at least progress as much as the local search
heuristic. Indeed, we observe in practice that the first two of these
heuristics performs much better than a random permutation for both the
ModMod and the SubSup procedure, thus addressing a question raised
in~\cite{narasimhanbilmes} about which ordering to use.  Practically
for the feature selection problem, the second heuristic seems to work
the best. \looseness-1

\subsection{Constrained minimization of a difference between submodular functions}

In this section we consider the problem of minimizing the difference
between submodular functions subject to constraints. We first note
that the problem of minimizing a submodular function under even simple
cardinality constraints in NP hard and also hard to
approximate~\cite{svitkina2008submodular}. Since there does not yet
seem to be a reasonable algorithm for constrained submodular
minimization at every iteration, it is unclear how we would use
Algorithm~\ref{alg:ssp}. However the problem of submodular
maximization under cardinality, matroid, and knapsack constraints
though NP hard admits a number of constant factor approximation
algorithms~\cite{nemhauser1978, lee2009non} and correspondingly the
cardinality constraints can be easily introduced in
Algorithm~\ref{alg:supsub}. Moreover, since a non-negative modular
function can be easily, directly and even exactly optimized under
cardinality, knapsack and matroid constraints
\cite{jegelka2010online}, Algorithm~\ref{alg:modmod} can also easily
be utilized. In addition, since problems such as finding the minimum
weight spanning tree, min-cut in a graph, etc., are polynomial time
algorithms in a number of cases, Algorithm~\ref{alg:modmod} can be
used when minimizing a non-negative function $v$ expressible as a
difference between submodular functions under combinatorial
constraints. If $v$ is non-negative, then so is its modular upper
bound, and then the ModMod procedure can directly be used for this
problem --- each iteration minimizes a non-negative modular function
subject to combinatorial constraints which is easy in many
cases~\cite{jegelka2010online, jegelka2010cooperative}. \looseness-1

\section{Theoretical results} 
\label{sec:theoretical-results}

In this section we analyze the computational and approximation
bounds for this problem. For simplicity we assume that the function
$v$ is normalized, i.e $v(\emptyset) = 0$. Hence we assume that $v$
achieves it minima at a negative value and correspondingly the
approximation factor in this case will be less than $1$.

We note in passing that the results in this section are mostly
negative, in that they demonstrate theoretically how complex a general
problem such as $\min_X [f(X) - g(X)]$ is, even for submodular $f$ and
$g$. In this paper, rather than consider these hardness results
pessimistically, we think of them as providing justification for the
heuristic procedures given in Section~\ref{sec:altern-algor-minim}
and \cite{narasimhanbilmes}.  In many cases, inspired heuristics can
yield 
good quality 
and hence practically useful
algorithms 
for real-world
problems. For example, the ModMod procedure
(Algorithm~\ref{alg:modmod}) and even the SupSub procedure
(Algorithm~\ref{alg:supsub}) can scale to very large problem sizes,
and thus can provide useful new strategies for the applications listed
in Section~\ref{sec:introduction}.

\subsection{Hardness}

Observe that the class of DS functions is essentially the class of
general set functions, and hence the problem of finding optimal
solutions is NP-hard. This is not surprising since general set
function minimization is inapproximable and there exist a large class
of functions where all (adaptive, possibly randomized) algorithms
perform arbitrarily poorly in polynomial time
\cite{trevisan2004inapproximability}. Clearly as is evident from
Theorem~\ref{thm1}, even the problem of finding the submodular
functions $f$ and $g$ requires exponential complexity. We moreover
show in the following theorem, however, that this problem is
multiplicatively inapproximable even when the functions $f$ and $g$
are easy to find. \looseness-1

\begin{theorem}\label{pnphardness}
  Unless P = NP, there cannot exist any polynomial time approximation
  algorithm for $\min_X v(X)$ where $v(X) = [f(X) - g(X)]$ is a positive set function and $f$ and $g$ are given
  submodular functions. In particular, let $n$ be the size of the
  problem instance, and $\alpha(n) > 0$ be any positive polynomial time
  computable function of $n$. If there exists a polynomial-time algorithm which
  is guaranteed to find a set $X': f(X') - g(X') < \alpha(n)
  \mbox{OPT}$, where OPT=$\min_X f(X) - g(X)$, then P = NP.
\end{theorem}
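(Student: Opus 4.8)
The plan is to prove inapproximability by a gap-producing reduction from an NP-complete problem, engineered so that the hard instances sit exactly at $\mbox{OPT}=0$. The key observation is that multiplication by any positive factor fixes $0$ and preserves sign, so a multiplicative guarantee of the form $v(X') < \alpha(n)\,\mbox{OPT}$ can never help an algorithm cross the threshold between a negative optimum and a non-negative one. Hence it suffices to exhibit, from an NP-complete instance, easily computable submodular $f$ and $g$ for which merely deciding whether $\min_X [f(X)-g(X)] < 0$ or $\geq 0$ already solves the NP-complete problem.

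Concretely, I would reduce from \textsc{Max Cut}. Given a graph $G=(V,E)$ and an integer threshold $t$, let the ground set be $V$, let $g(S)$ be the cut function (the number of edges with exactly one endpoint in $S$), which is submodular, and let $f(S) \equiv t-1$ be the constant function, which is modular and hence submodular; both are clearly polynomial-time computable. Then $v(S)=f(S)-g(S)=(t-1)-\mathrm{cut}(S)$ and $\mbox{OPT}=\min_X v(X)=(t-1)-\max_S \mathrm{cut}(S)$. If the instance is a YES instance ($\max_S \mathrm{cut}(S)\geq t$) then $\mbox{OPT}\leq -1<0$, whereas if it is a NO instance ($\max_S \mathrm{cut}(S)\leq t-1$) then $v(S)\geq 0$ for every $S$ and so $\mbox{OPT}\geq 0$. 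This realizes exactly the sign gap at $0$ described above, with $f,g$ given explicitly.

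Next I would derive the contradiction. Suppose a polynomial-time algorithm is guaranteed to return $X'$ with $v(X') < \alpha(n)\,\mbox{OPT}$ for some positive polynomial-time computable $\alpha(n)$. On a YES instance $\mbox{OPT}<0$, so $\alpha(n)\,\mbox{OPT}<0$ and therefore $v(X')<0$; on a NO instance $\mbox{OPT}\geq 0$ forces $v(X')\geq \mbox{OPT}\geq 0$, since no feasible set can beat the optimum. Thus the sign of the single evaluated quantity $f(X')-g(X')$ --- computable in polynomial time --- decides whether the original graph has a cut of size at least $t$. Since \textsc{Max Cut} is NP-complete, this places an NP-complete problem in P, giving P $=$ NP. The argument is insensitive to the particular $\alpha(n)$ precisely because $\alpha(n)\cdot 0 = 0$ and $\alpha(n)>0$ preserves sign, which is why no multiplicative factor whatsoever can exist.

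The main obstacle, and the only genuinely creative step, is the gap construction: one must choose a source problem and a submodular decomposition so that the YES/NO dichotomy lands on opposite sides of $\mbox{OPT}=0$ while keeping both $f$ and $g$ submodular and polynomial-time computable (so that the hardness cannot be blamed on the difficulty of \emph{finding} the decomposition, as the theorem emphasizes). \textsc{Max Cut} with a constant $f$ is one clean choice; any NP-complete problem expressible as a submodular maximization (equivalently, a supermodular minimization) with an integer objective and a tunable threshold would serve equally well. A secondary point to handle carefully is the sign convention: because $v$ attains a negative minimum on the hard instances, the relevant approximation factor is the one discussed just before the theorem, and one should phrase the guarantee so that $\alpha(n)\,\mbox{OPT}$ and $\mbox{OPT}$ share the same sign; the normalization $v(\emptyset)=0$ plays no role in the sign test and may be imposed or dropped freely.
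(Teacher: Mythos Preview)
Your reduction is sound, but it does not prove the theorem as stated. The theorem explicitly restricts to the case where $v(X)=f(X)-g(X)$ is a \emph{positive} set function; hence $\mbox{OPT}>0$ and the guarantee $v(X')<\alpha(n)\,\mbox{OPT}$ carries its ordinary meaning with $\alpha(n)\ge 1$. Your whole mechanism is the sign gap at $0$: on YES instances your $v$ takes a negative value, and you separate YES from NO by the sign of $v(X')$. That construction does not belong to the class of positive DS functions, and hardness for the larger class of all DS functions is a weaker statement than the one claimed. Shifting $v$ by an additive constant to force positivity cannot rescue the argument, since multiplicative approximation is not shift-invariant: adding $M$ turns the YES/NO gap from $(\le -1,\;\ge 0)$ into $(\le M-1,\;\ge M)$, whose ratio tends to $1$ and is swamped by any $\alpha(n)>1$.

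The paper avoids the sign boundary altogether by planting the gap strictly inside the positive reals. It reduces from \textsc{Subset Sum}: given a positive modular function $m$ and a target $t$, define a two-valued $v$ that equals roughly $1/\alpha(n)$ on ``hit'' sets $S$ with $m(S)=t$ and equals $1$ on all other sets. Then $v>0$ everywhere; on a YES instance $\mbox{OPT}\approx 1/\alpha(n)$, so any $\alpha(n)$-approximate output $X'$ must satisfy $v(X')<1$, which forces $m(X')=t$ and solves \textsc{Subset Sum} in polynomial time. Because $v$ takes only two values, a lower bound on its second-difference parameter is trivially available, and Lemma~\ref{thm2} then supplies explicit submodular $f,g$ in polynomial time, matching the theorem's hypothesis that $f$ and $g$ are \emph{given}. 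The creative step you correctly identified is still the gap construction; what you missed is that for this theorem the gap must lie entirely above zero.
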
  \looseness-1
\notextendedv{The proof of this theorem is in~\cite{extended}.}
\extendedv{
\begin{proof}
  We prove this by reducing this to the \textit{subset sum} problem.
  \extendedv{Given a positive modular function $m$ and a positive
    constant $t$, is there a subset $S \subseteq V$ such that $m(S) =
    t$?}  First we choose a random set $C$ (unknown to the algorithm),
  and define $t = m(C)$. Define a set function $v$, such that $v(S) =
  1$, if $m(S) = t$ and $v(S) = \frac{1}{\alpha(n)} - o(1)$
  otherwise. Observe that $\min_S v(S) = \frac{1}{\alpha(n)} - o(1)$, since
  $\alpha(n) > 1$. Note that $\alpha = \min_{X \subset Y \subseteq V
    \setminus j } v(j | X) - v(j | Y ) \geq 2(\frac{1}{\alpha(n)} -
  1)$. Hence we can easily compute a lower bound on $\alpha$ and hence
  from lemma~\ref{thm2} we can directly compute the decomposition $f$
  and $g$. In fact notice that the decomposition is directly
  computable since both $\alpha$ and $\beta$ are known.
  
  Now suppose there exists a polynomial time algorithm for this
  problem with an approximation factor of $\alpha(n)$. This implies that
  the algorithm is guaranteed to find a set $S$, such that $v(S) <
  1$. Hence this algorithm will solve the subset sum problem in
  polynomial time, which is a contradiction unless P = NP.
\end{proof}} 
In fact we show below that independent of the $P = NP$ question, there
cannot exist a sub-exponential time algorithm for this problem with any constant factor approximation. 
The theorem below gives information theoretic hardness for this
problem.
\begin{theorem}
  For any $0 < \epsilon < 1$, there cannot exist any deterministic (or
  possibly randomized) algorithm for $\min_X [f(X) - g(X)]$ (where $f$
  and $g$ are given submodular functions), that always finds a
  solution which is at most $\frac{1}{\epsilon}$ times the optimal, in fewer
  than $e^{\epsilon^2 n/8}$ queries.
\end{theorem}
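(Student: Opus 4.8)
The plan is to prove this via an information-theoretic adversary argument: I would exhibit a \emph{distribution} over instances $(f,g)$ on which every deterministic algorithm issuing fewer than $e^{\epsilon^2 n/8}$ value queries fails to return a $\frac{1}{\epsilon}$-approximation with high probability, and then invoke Yao's minimax principle to transfer the bound to (possibly randomized) algorithms. The heart of the construction is a \emph{planted hidden set}: draw $R \subseteq V$ uniformly at random (for instance, by placing each element into $R$ independently with probability $1/2$) and build the pair $(f,g)$ so that the induced objective $v = f - g$ is everywhere positive, attains a small optimum only on sets strongly correlated with $R$, and takes a value more than a factor $\frac{1}{\epsilon}$ larger on every \emph{balanced} set $X$, i.e.\ one whose intersection $|X \cap R|$ is close to its expectation $|X|/2$.

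For the building blocks I would use truncated cardinality-type functions, which are submodular: for concave $\psi$ and fixed $A$ the map $X \mapsto \psi(|X \cap A|)$ is submodular, and both truncation $\min(h(\cdot),c)$ and summation preserve submodularity. Following a Svitkina--Fleischer-style masking, I would choose $f$ and $g$ so that on every balanced $X$ the saturating terms are inactive and $f(X)$ and $g(X)$ each reduce to fixed, $R$-\emph{independent} formulas, while the gap exposing the cheap optimum switches on only when $X$ is heavily skewed toward $R$. The key probabilistic estimate is then a concentration bound: for any \emph{fixed} query $X$, Hoeffding's inequality shows that the probability over the random $R$ that $X$ is skewed enough to reveal information about $R$ is at most $e^{-\epsilon^2 n/8}$. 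Since an algorithm issuing $q$ queries produces, along any oracle-consistent path, a fixed sequence of at most $q$ query sets, a union bound gives that if $q < e^{\epsilon^2 n/8}$ then with high probability every query is balanced; on such a run the entire transcript is identical for all hidden sets $R$, so the algorithm's output is a fixed set independent of $R$.

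It remains to conclude: a set chosen independently of the random $R$ is itself balanced with high probability (by the same Hoeffding estimate), and hence has objective value exceeding $\frac{1}{\epsilon}\,\mathrm{OPT}$, so the algorithm fails; averaging over $R$ and applying Yao's principle extends this to randomized algorithms. The main obstacle I anticipate lies entirely in the construction rather than the probabilistic argument: I must produce two genuinely \emph{submodular} functions $f$ and $g$ whose difference is (i) everywhere positive, (ii) more than a factor $\frac{1}{\epsilon}$ cheaper on the planted family than on balanced sets, and (iii) such that the \emph{individual} oracles $f$ and $g$ — not merely their difference — leak nothing about $R$ on balanced queries. Reconciling (iii) with submodularity is the delicate point, because the algorithm may probe $f$ and $g$ separately; the truncation thresholds must be tuned so that the $R$-dependent terms stay saturated across all balanced sets, and those thresholds together with the deviation parameter in Hoeffding's bound must be calibrated to yield exactly the exponent $\epsilon^2 n/8$.
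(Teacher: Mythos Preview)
Your high-level probabilistic strategy---plant a random hidden set, argue via concentration that any sub-exponential sequence of queries hits only ``balanced'' sets with high probability, and then invoke Yao's principle---matches the paper's exactly. The difference lies entirely in how the pair $(f,g)$ is manufactured, and here the paper takes a route that is both simpler and neatly sidesteps the obstacle you flag at the end.

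Rather than assemble $f$ and $g$ separately from Svitkina--Fleischer-style truncated pieces and then worry about either oracle leaking the hidden set, the paper defines the \emph{target} function $v$ directly as a two-valued function of a random equipartition $V = C \cup D$, $|C|=|D|=n/2$: set $v(S)=1/\epsilon$ when $\bigl||S\cap C|-|S\cap D|\bigr|\le \epsilon n$ and $v(S)=1$ otherwise. Since every gain $v(j\mid X)$ lies in $[-(1/\epsilon-1),\,1/\epsilon-1]$, the submodularity deficit $\alpha=\min_{X\subset Y\subseteq V\setminus j}\bigl(v(j\mid X)-v(j\mid Y)\bigr)$ is bounded below by $2(1-1/\epsilon)$, and the paper's own decomposition lemma (Lemma~\ref{thm2}) immediately yields $f(X)=v(X)+c\sqrt{|X|}$ and $g(X)=c\sqrt{|X|}$ for an explicit constant $c$. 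The crucial feature is that $g$ depends only on $|X|$ and hence carries \emph{no} information about $(C,D)$; querying $f$ is equivalent to querying $v$, which returns the constant $1/\epsilon$ on every balanced query. The ``delicate point'' you anticipate---tuning truncation thresholds so that $f$ and $g$ are individually uninformative on balanced sets---simply never arises.

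Your construction would also work and has the merit of producing more ``natural'' submodular $f$ and $g$ (coverage- or rank-type) rather than the admittedly artificial $v+c\sqrt{|X|}$. But the paper's shortcut---encode all structure in $v$, then submodularize by adding the \emph{same} symmetric strictly-submodular term to both sides---is cleaner, requires no calibration beyond bounding $\alpha$, and reduces the oracle-indistinguishability argument to a single Chernoff bound on $v$ alone.
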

\notextendedv{Again the proof of this theorem is in~\cite{extended}.}
\extendedv{\begin{proof}
For showing this theorem, we use the same proof technique as
in~\cite{fiege2011submodmax}. Define two sets $C$ and $D$, such
that $V = C \cup D$ and $|C| = |D| = n/2$. We then define a set
function $v(S)$ which depends only on $k = |S \cap C|$ and $l = |S
\cap D|$. 
In particular define $v(S) = \frac{1}{\epsilon}, \mbox{ if } |k - l|
\leq \epsilon n$ and $v(S) = 1, \mbox{ if } |k
- l| > \epsilon n$. Again, we have a trivial bound on $\alpha$
here
since 
$v(j | X) \geq \frac{1}{\epsilon} - 1$ and 
$v(j | Y) \leq 1 - \frac{1}{\epsilon}$. Hence, 
$\alpha = \min_{X \subset Y
    \subseteq V \setminus j} v(j|X) - v(j|Y) >2 |1 -
\frac{1}{\epsilon}|$. 
Thus, for this set
function, a decomposition $v=f-g$ can easily be obtained (Lemma~\ref{thm2}). \looseness-1

Now, let the partition $(C,D)$ be taken uniformly at random and
unknown to the algorithm. The algorithm issues some queries $S$ to the
value oracle. Call $S$ ``unbalanced'' if $|S \cap C|$ differs from $|S
\cap D|$ by more than $\epsilon n$. Recall the Chernoff
bounds~\cite{chernoff}:
Let $Y_1, Y_2, \cdots, Y_t$ be independent random variables in $[-1, 1]$, such that $\mathbb{E}[Y_i] = 0$, then:
\begin{equation}
Pr[\sum_{i = 1}^t Y_i > \lambda] \leq 2e^{-\lambda^2/2t}.
\end{equation}
Define $Y_i = I(i \in S)[ I(i \in C) - I(i \in D)]$. Clearly $Y_i \in [-1, 1]$, and we can use the bounds above. 
Hence for any query $S$, the
probability that $S$ is unbalanced is at most $2e^{-\epsilon^2 n/2}$. 
Thus, we can see that even
after $e^{\epsilon^2 n/4}$ number of queries, the probability that
the resulting set is unbalanced is still $2e^{-\epsilon^2 n/4}$. 
Hence
any algorithm will query only balanced sets 
regardless of $C$ and $D$,
and consequently with high probability the algorithm will obtain
$\frac{1}{\epsilon}$ as the minimum, while the actual minimum is
$1$. Thus, such an algorithm will never be able to achieve an
approximation factor better than $\frac{1}{\epsilon}$.
\end{proof}}

Essentially the theorems above say that even when we are given
(or can easily find) a decomposition such that $v(X) = f(X) - g(X)$,
there exist set functions such that any algorithm (either adaptive or
randomized) cannot be approximable upto any constant factor. It is possible that one could come up with an information theoretic construction to show this same result for any polynomial approximation factor. However under the assumption of P$\neq$NP, Theorem~\ref{pnphardness} shows that this problem is inapproximable upto any polynomial factor. Hence any algorithm trying to find the global optimum
for this problem~\cite{byrnes2009maximizing, kawahara2011prismatic} can only be exponential in the worst case.

Interestingly, the hardness results above holds even when the submodular functions $f$ and $g$ are monotone. This follows from the following Lemma:
\begin{lemma}
Given (not necessarily monotone) submodular functions $f$ and $g$, there exists monotone submodular functions $f^{\prime}$ and $g^{\prime}$ such that,
\begin{align}
f(X) - g(X) = f^{\prime}(X) - g^{\prime}(X), \forall X \subseteq V
\end{align}
\end{lemma}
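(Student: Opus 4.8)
The plan is to prove the statement by a constructive argument that modifies $f$ and $g$ \emph{together} rather than separately: I would add one and the same function $h$ to both, setting $f' = f + h$ and $g' = g + h$. The difference is then preserved automatically, since $f'(X) - g'(X) = f(X) + h(X) - g(X) - h(X) = f(X) - g(X)$ for every $X \subseteq V$. The entire problem thus reduces to exhibiting a single $h$ for which $f + h$ and $g + h$ are simultaneously submodular and monotone.

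For the construction I would take $h$ to be a modular function with a large, uniform positive weight on each element, i.e. $h(X) = c\,|X|$ for a constant $c \geq 0$ to be fixed. Two facts make this work. First, $h$ is modular, and adding a modular function to a submodular function leaves it submodular; hence $f'$ and $g'$ are submodular for any $c$. Second, a submodular function is monotone \emph{iff} all its marginal gains are non-negative, and by the diminishing-returns property the gain $f(j| S)$ over $S \subseteq V \setminus j$ is minimized at $S = V \setminus j$. It therefore suffices to pick $c$ at least as large as $\max_{j \in V} \max\{ -f(j| V \setminus j),\, -g(j| V \setminus j),\, 0 \}$, which is finite because the ground set is finite.

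With this choice of $c$ I would verify monotonicity directly: for any $j \notin S$,
\begin{align}
f'(j| S) = f(j| S) + c \geq f(j| V \setminus j) + c \geq 0, \nonumber
\end{align}
where the first inequality is diminishing returns and the second is the choice of $c$; the identical computation applies to $g'$. This shows $f'$ and $g'$ are monotone submodular and reproduce the original difference, completing the proof.

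The argument has no genuinely difficult step — finiteness of $V$ guarantees a large enough common slope $c$ exists. The one point worth flagging, and the only real constraint on the construction, is that $h$ must be added to both functions \emph{identically} so that $f - g$ survives; this forces $h$ itself to be submodular (here modular), so that neither $f'$ nor $g'$ loses submodularity. Had one tried to monotonize $f$ and $g$ with different corrections, the difference would not be preserved, so the main idea is precisely that a single common modular shift can monotonize both summands at once. As a closing remark, this device is exactly what lets the earlier hardness results transfer to the monotone setting, since each decomposition $v = f - g$ used there can be replaced by its monotonized version $v = f' - g'$ without altering $v$.
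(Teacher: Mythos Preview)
Your proof is correct and in fact more elementary than the paper's. You add a single common modular function $h(X)=c\,|X|$ to both $f$ and $g$, with $c$ chosen large enough that every marginal gain of $f+h$ and $g+h$ is non-negative; submodularity is preserved because $h$ is modular, and the difference is preserved trivially because the same $h$ is added to each side.

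The paper takes a different, more structural route. It invokes the decomposition theorem of Cunningham, which writes any submodular function as a totally normalized (monotone) polymatroid rank function plus a modular part: $f(X)=f'(X)+\sum_{j\in X}f(j\,|\,V\setminus j)$ and similarly for $g$. This leaves $v(X)=f'(X)-g'(X)+k(X)$ with $k(X)=\sum_{j\in X}v(j\,|\,V\setminus j)$ modular. The paper then splits $k$ by sign and attaches the non-negative coordinates to $f'$ and the negative coordinates (with sign flipped) to $g'$, yielding two monotone submodular functions whose difference is $v$. In effect, the paper subtracts \emph{different} modular corrections from $f$ and $g$ and then redistributes the residual modular piece, whereas you add the \emph{same} modular correction to both.

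What each approach buys: your argument is shorter, self-contained, and needs nothing beyond diminishing returns and finiteness of $V$. The paper's construction ties the lemma to the Cunningham decomposition already used elsewhere in that section (for the lower bounds in Theorem~\ref{thmlowbound}), and it produces monotone pieces that are in a sense tighter---no uniform slack $c$ is added, only the exact modular part needed to normalize each function. For the purpose of the corollary on hardness in the monotone case, either construction suffices equally well.
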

\begin{proof}
The proof of this Lemma follows from a simple observation. The decomposition theorem of~\cite{cun82} shows that any submodular
function can be decomposed into a modular function plus a monotone
non-decreasing and \textit{totally normalized} polymatroid rank
function. Specifically, given submodular $f, g$ we have
\begin{align}
f^{\prime}(X) \triangleq f(X) - \sum_{j \in X} f(j | V \backslash j)
\end{align}
 and
\begin{align}
g^{\prime}(X) \triangleq g(X) - \sum_{j \in X} g(j | V \backslash j)
\end{align}
 $f^{\prime}, g^{\prime}$ are then totally normalized polymatroid rank
functions. Hence we have: $v(X) = f^{\prime}(X) - g^{\prime}(X) +
k(X)$, with modular $k(X) = \sum_{j \in X} v(j | V \backslash j)$. The idea is then to add $v(j)$ to $f^{\prime}$ if $v(j) \geq 0$ or add it to $g^{\prime}$ other-wise. In particular, let $V^+ = \{j: v(j) \geq 0\}$ and $V^- = \{j: v(j) < 0\}$. Notice that $V^+ \cup V^- = V$. Then,
\begin{align}
v(X) = f^{\prime}(X) + k(X \cap V^+) - \{g^{\prime} -  k(X \cap V^-)\}
\end{align}
Notice above that $f^{\prime}(X) + k(X \cap V^+)$ and $g^{\prime} -  k(X \cap V^-)$ are both monotone non-decreasing. Hence proved.
\end{proof}

This then implies the following corollary.
\begin{corollary}
Given submodular functions $f$ and $g$ such that $v(X) = f(X) - g(X) \geq 0$, the problem $\min_{X \subseteq V} v(X)$ is inapproximable, even if both $f$ and $g$ are monotone non-decreasing submodular.
\end{corollary}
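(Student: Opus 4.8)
The plan is to obtain this corollary as an essentially immediate consequence of combining the two inapproximability theorems above with the decomposition Lemma that precedes the corollary; no fresh hardness construction is needed. First I would recall that Theorem~\ref{pnphardness} (multiplicative inapproximability under P$\neq$NP) and the information-theoretic hardness theorem each exhibit an explicit positive set function $v$ together with a submodular decomposition $v = f - g$ for which $\min_X v(X)$ admits no polynomial (respectively sub-exponential) approximation. In both constructions the pair $f,g$ is obtained via Lemma~\ref{thm2}, so the functions are available explicitly, but they are not guaranteed to be monotone; the content of the corollary is precisely that monotonicity buys us nothing.

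Next I would invoke the preceding Lemma, which shows that any decomposition $v = f - g$ into (not necessarily monotone) submodular functions can be rewritten as $v = f' - g'$ with $f'$ and $g'$ monotone non-decreasing submodular. The crucial point is that the target objective $v$ is left \emph{unchanged}: the optimization problem $\min_{X\subseteq V} v(X)$ is literally the same problem, merely presented through a different but equivalent decomposition. Consequently, any approximation guarantee for the monotone formulation $v = f' - g'$ is simultaneously an approximation guarantee for the original formulation $v = f - g$. Since the latter is inapproximable by the theorems above, the monotone formulation must be inapproximable as well, which is exactly the claim of the corollary.

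The only step requiring any care — and it is a mild one — is to confirm that the monotone decomposition furnished by the Lemma is polynomial-time computable from the given hard instance, so that a hypothetical approximation algorithm for the monotone problem could actually be run on it and thereby contradict the hardness results. This holds because the Lemma's construction of $f'$ and $g'$ is fully explicit: it uses only the singleton gains $f(j \mid V\setminus j)$ and $g(j \mid V\setminus j)$ and the signs of the $v(j)$ to split the residual modular term, all of which require merely $O(n)$ oracle evaluations on the already-explicit hard instances of Theorem~\ref{pnphardness} and the information-theoretic theorem. With that verification in place, the inapproximability transfers verbatim to the monotone non-decreasing setting, completing the argument.
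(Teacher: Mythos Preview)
Your proposal is correct and follows exactly the route the paper intends: the paper presents the corollary with no proof beyond the sentence ``This then implies the following corollary,'' meaning it is the immediate combination of the earlier hardness theorems with the preceding Lemma on monotone re-decomposition. Your extra care in verifying that the monotone decomposition is computable with $O(n)$ oracle calls is a welcome detail the paper leaves implicit, but the overall argument is the same.
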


\subsection{Polynomial time lower and upper bounds}

Since any submodular
function can be decomposed into a modular function plus a monotone
non-decreasing and \textit{totally normalized} polymatroid rank
function~\cite{cun82}, we have: $v(X) = f^{\prime}(X) - g^{\prime}(X) +
k(X)$, with modular $k(X) = \sum_{j \in X} v(j | V \backslash j)$ and $f^{\prime}$ and $g^{\prime}$ being the totally normalized polymatroid functions.\looseness-1
 
The algorithms in the previous sections are all based on repeatedly
finding upper bounds for $v$.  The following lower bounds directly
follow from the results above. \notextendedv{(The proof of this is
  in~\cite{extended})}\looseness-1
\begin{theorem} \label{thmlowbound}
We have the following two lower bounds on the minimizers of $v(X) = f(X) - g(X)$:
\begin{align}
\min_X v(X) &\geq \min_X f^{\prime}(X) + k(X) - g^{\prime}(V) \nonumber \\
\min_X v(X) &\geq f^{\prime}(\emptyset) - g^{\prime}(V) + \sum_{j \in V} \min(k(j), 0) \nonumber
\end{align} 
\end{theorem}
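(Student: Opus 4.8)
The plan is to exploit the decomposition $v(X) = f^{\prime}(X) - g^{\prime}(X) + k(X)$ recalled immediately before the theorem, where $f^{\prime}$ and $g^{\prime}$ are the totally normalized polymatroid rank functions from the Cunningham decomposition (hence monotone non-decreasing and normalized, with $f^{\prime}(\emptyset) = g^{\prime}(\emptyset) = 0$) and $k(X) = \sum_{j \in X} v(j \mid V \backslash j)$ is modular. Both bounds then follow by relaxing the three pieces of this decomposition one at a time, using monotonicity for the two polymatroid terms and modularity for $k$, so the whole argument reduces to two short chains of inequalities rather than any real optimization.

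For the first bound I would use only the monotonicity of $g^{\prime}$. Since $g^{\prime}$ is non-decreasing, $g^{\prime}(X) \leq g^{\prime}(V)$ for every $X \subseteq V$, hence $-g^{\prime}(X) \geq -g^{\prime}(V)$. Substituting into the decomposition gives $v(X) = f^{\prime}(X) + k(X) - g^{\prime}(X) \geq f^{\prime}(X) + k(X) - g^{\prime}(V)$ for all $X$. Taking the minimum over $X$ on both sides and pulling the constant $-g^{\prime}(V)$ outside the minimization yields exactly $\min_X v(X) \geq \min_X [f^{\prime}(X) + k(X)] - g^{\prime}(V)$, the first stated inequality.

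For the second bound I would further lower-bound $\min_X [f^{\prime}(X) + k(X)]$ by decoupling its two summands. Monotonicity of $f^{\prime}$ gives $f^{\prime}(X) \geq f^{\prime}(\emptyset)$ for all $X$, while modularity of $k$ makes its unconstrained minimum over subsets available in closed form: selecting exactly the elements of negative weight gives $\min_X k(X) = \sum_{j \in V} \min(k(j), 0)$. Since these are two independent lower bounds on the two summands, $f^{\prime}(X) + k(X) \geq f^{\prime}(\emptyset) + \sum_{j \in V} \min(k(j), 0)$ holds for every $X$, and combining with the first bound gives $\min_X v(X) \geq f^{\prime}(\emptyset) - g^{\prime}(V) + \sum_{j \in V} \min(k(j), 0)$.

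I do not anticipate a genuine obstacle: the entire content is recognizing that the Cunningham decomposition splits $v$ into one monotone piece that is bounded above at $V$, one monotone piece that is bounded below at $\emptyset$, and one modular piece whose minimum is explicit. The only point deserving a word of care is that the second bound is a strict weakening of the first, since it replaces the possibly-correlated minimization $\min_X [f^{\prime}(X) + k(X)]$ by the separately minimized $f^{\prime}(\emptyset) + \min_X k(X)$; I would therefore note that the first bound is never worse and is preferable whenever the polymatroid-plus-modular minimization $\min_X [f^{\prime}(X) + k(X)]$ is itself tractable (e.g. a submodular minimization), with the second bound serving as the fully closed-form, always-computable fallback.
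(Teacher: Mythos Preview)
Your argument is correct and matches the paper's proof essentially line for line: the paper also bounds $-g'(X)$ below by $-g'(V)$ via monotonicity to get the first inequality, and then decouples $\min_X[f'(X)+k(X)] \geq \min_X f'(X) + \min_X k(X) = f'(\emptyset) + \sum_{j\in V}\min(k(j),0)$ to obtain the second. Your closing remark that the second bound is a strict weakening of the first, with the first preferable whenever the submodular minimization $\min_X[f'(X)+k(X)]$ is tractable, is a nice addition.
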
%
\extendedv{\begin{proof}
Notice that 
\begin{align}
\min_X f(X) - g(X) 
 &= \min_X f^{\prime}(X) - g^{\prime}(X) + k(X) \nonumber \\
 &\geq \min_X ( f^{\prime}(X) + k(X)) - \max_X g^{\prime}(X)  \nonumber \\
 &= \min_X f^{\prime}(X) + k(X) - g^{\prime}(V) \nonumber
\end{align}
To get the second result, we start from the bound above and loosen it as:
\begin{align}
& \min_X f^{\prime}(X) + k(X) - g^{\prime}(V) \nonumber \\
&\geq \min_X f^{\prime}(X) + \min_X k(X) - g^{\prime}(V) \nonumber\\
&= f^{\prime}(\emptyset) + \sum_{j \in V} \min(v(j | V \backslash j), 0) - g^{\prime}(V) \nonumber \\
&= f^{\prime}(\emptyset) + \sum_{j \in V} \min(k(j), 0) - g^{\prime}(V)
\end{align}%
\vspace{-1ex}
\end{proof}
}

\vspace{-2ex}
The above lower bounds essentially provide 
bounds on the minima of the objective and thus can be used
to obtain an additive approximation guarantee. The algorithms
described in this paper are all polynomial time algorithms (as we show
below) and correspondingly from the bounds above we can get an 
estimate on how far we are from the optimal.

\subsection{Computational Bounds}
\label{sec:computational-bounds}

We now provide computational bounds for $\epsilon$-approximate versions
of our algorithms. 
Note that this
was left as an open question in~\cite{narasimhanbilmes}. 
Finding the
local minimizer of DS functions is PLS complete since it generalizes
the problem of finding the local optimum of the MAX-CUT
problem~\cite{schaffer1991simple}. 
\extendedv{Note that this trivially generalizes the MAX-CUT problem since if we set $f(X) = 0$ and $g(X)$ is the cut function, we get the max cut problem.
}%
However we show that an
$\epsilon$-approximate version of this algorithm will converge in
polynomial time.
\begin{definition}
  An $\epsilon$-approximate version of an iterative monotone
  non-decreasing algorithm for minimizing a set function $v$ is
  defined as a version of that algorithm, where we proceed to step
  $t+1$ only if $v(X^{t+1}) \leq v(X^t) (1 + \epsilon)$.
\end{definition}
Note that the $\epsilon$-approximate versions of
algorithms~\ref{alg:ssp}, \ref{alg:supsub} and~\ref{alg:modmod}, are
guaranteed to converge to $\epsilon$-approximate local optima. 
\extendedv{An $\epsilon$-approximate local optima of a function $v$ is a set $X$, such that $v(X \cup j) \geq v(X) (1 + \epsilon)$ and $v(X \backslash j) \geq v(X) (1 + \epsilon)$.
}%
W.l.o.g., assume 
that $X^0 = \emptyset$.
Then we have the
following computational bounds: \looseness-1
\begin{theorem}
  The $\epsilon$-approximate versions of algorithms~\ref{alg:ssp},
  \ref{alg:supsub} and~\ref{alg:modmod} have a worst case complexity
  of $O(\frac{\log(|M|/|m|)}{\epsilon}T))$, where $M =
  f^{\prime}(\emptyset) + \sum_{j \in V} \min(v(j | V \backslash j),
  0) - g^{\prime}(V)$, $m = v(X^1)$ and $O(T)$ is the complexity
  of every iteration of the algorithm (which corresponds to
  respectively the submodular minimization, maximization, or modular
  minimization in algorithms~\ref{alg:ssp}, \ref{alg:supsub}
  and~\ref{alg:modmod})..
\end{theorem}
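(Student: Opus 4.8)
The plan is to exploit the fact that each accepted step of an $\epsilon$-approximate algorithm forces a \emph{geometric} decrease in the objective, and that the objective can never fall below the polynomial-time computable lower bound of Theorem~\ref{thmlowbound}; together these pin down the number of iterations. First I would record the key invariant. Since $v$ is normalized with $v(\emptyset)=0$ and attains its minimum at a negative value, every iterate $X^t$ with $t \geq 1$ has $v(X^t) < 0$. By definition of the $\epsilon$-approximate version we advance from $X^t$ to $X^{t+1}$ only when $v(X^{t+1}) \leq (1+\epsilon)\,v(X^t)$; because both sides are negative, this is exactly the statement that the magnitude grows by a factor of at least $(1+\epsilon)$, i.e.\ $|v(X^{t+1})| \geq (1+\epsilon)\,|v(X^t)|$. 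We must anchor the induction at $X^1$ rather than $X^0$: since $v(X^0)=v(\emptyset)=0$, the multiplicative test at the very first step reduces to $v(X^1)\leq 0$ and contributes no factor, which is precisely why the theorem states $m=v(X^1)$. Unrolling the recursion from $t=1$ then gives $|v(X^t)| \geq (1+\epsilon)^{t-1}|m|$.

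Next I would bound the iterates from below. By Theorem~\ref{thmlowbound} we have $\min_X v(X) \geq M$ with $M = f^{\prime}(\emptyset) + \sum_{j \in V} \min(v(j \mid V \backslash j),0) - g^{\prime}(V)$ (this is the second bound of that theorem, since $k(j) = v(j \mid V\backslash j)$), a quantity computable in polynomial time. Hence $M \leq \min_X v(X) \leq v(X^t)$, and taking magnitudes of these negative numbers yields $|v(X^t)| \leq |M|$ for every $t$; in particular $M \leq m < 0$ guarantees $|M| \geq |m| > 0$, so the logarithm below is well-defined and nonnegative. Combining the invariant with the lower bound gives $(1+\epsilon)^{t-1}|m| \leq |M|$, and therefore
\begin{align}
t - 1 \leq \frac{\log(|M|/|m|)}{\log(1+\epsilon)}.
\end{align}

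Finally, using the elementary estimate $\log(1+\epsilon) \geq \epsilon/2$ for $\epsilon \in (0,1]$ (equivalently $\log(1+\epsilon)\geq \epsilon/(1+\epsilon)$), the number of accepted iterations is $O\!\left(\frac{\log(|M|/|m|)}{\epsilon}\right)$. Multiplying by the per-iteration cost $O(T)$ --- the cost of submodular minimization, submodular maximization, or modular minimization in Algorithms~\ref{alg:ssp}, \ref{alg:supsub}, and~\ref{alg:modmod} respectively --- gives the claimed bound $O\!\left(\frac{\log(|M|/|m|)}{\epsilon}\,T\right)$. The main obstacle is not any single calculation but getting the bookkeeping right: one must be careful that the objective values are negative, so that the multiplicative acceptance rule is a magnitude \emph{increase}, and that the geometric argument is anchored at $X^1$ (where $v<0$) rather than at $X^0$ (where $v=0$ makes the ratio ill-defined). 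A secondary point worth verifying is that the identical argument applies verbatim to all three algorithms, since each is monotone non-increasing and the lower bound $M$ depends only on $v$ and not on which modular surrogate is used at each step.
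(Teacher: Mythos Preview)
Your proposal is correct and follows essentially the same approach as the paper's proof: both use the geometric growth of $|v(X^t)|$ forced by the $\epsilon$-acceptance rule together with the lower bound $M$ from Theorem~\ref{thmlowbound} to cap the number of iterations at $O(\log(|M|/|m|)/\epsilon)$. Your write-up is in fact more careful than the paper's in handling the sign bookkeeping and in explaining why the induction must be anchored at $X^1$; the only small omission is that the paper separately dispatches the degenerate case $m=v(X^1)=0$ (where the algorithm halts after one step), which you implicitly assume away when you write $m<0$.
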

\notextendedv{The proof of this theorem is in~\cite{extended}}
\extendedv{\begin{proof}
  Observe that $m = v(X^1) \leq v(X^0) = 0$. Correspondingly if $v(X^1) = 0$, it implies that the algorithm has converged, and cannot improve (since we are assuming our algorithms are $\epsilon-$approximate.  Hence in this case the algorithm will converge in one iteration. Consider then the case of $m < 0$.
  Note also from
  Theorem~\ref{thmlowbound} that $M = f^{\prime}(\emptyset) + \sum_{j
    \in V} \min(v(j | V \backslash j), 0) - g^{\prime}(V) < 0$ and
  that $\min_X f(X) - g(X) \geq M$. Since we are guaranteed to improve
  by a factor by at least $1 + \epsilon$ at every iteration we have
  that in $k$ iterations: $|m|(1 + \epsilon)^k \leq |M| \Rightarrow k
  = O(\frac{\log(|M|/|m|)}{\epsilon})$. 
Also since we assume that
  the complexity at every iteration is $O(T)$ we get the above result.
\end{proof}}

Observe that for the algorithms we use, $O(T)$ is strongly polynomial
in $n$. The best strongly polynomial time algorithm for submodular
function minimization is $O(n^5 \eta + n^6)$~\cite{orlin2009faster}
(the lower bound is currently unknown). Further the worst case
complexity of the greedy algorithm for maximization is $O(n^2)$ while
the complexity of modular minimization is just $O(n)$. Note finally
that these are worst case complexities and actually the algorithms run
much faster in practice.

\begin{figure}[h]
  \centering
  \subfloat[SVM]{\label{fig:mushroomssvm}\includegraphics[width=0.23\textwidth]{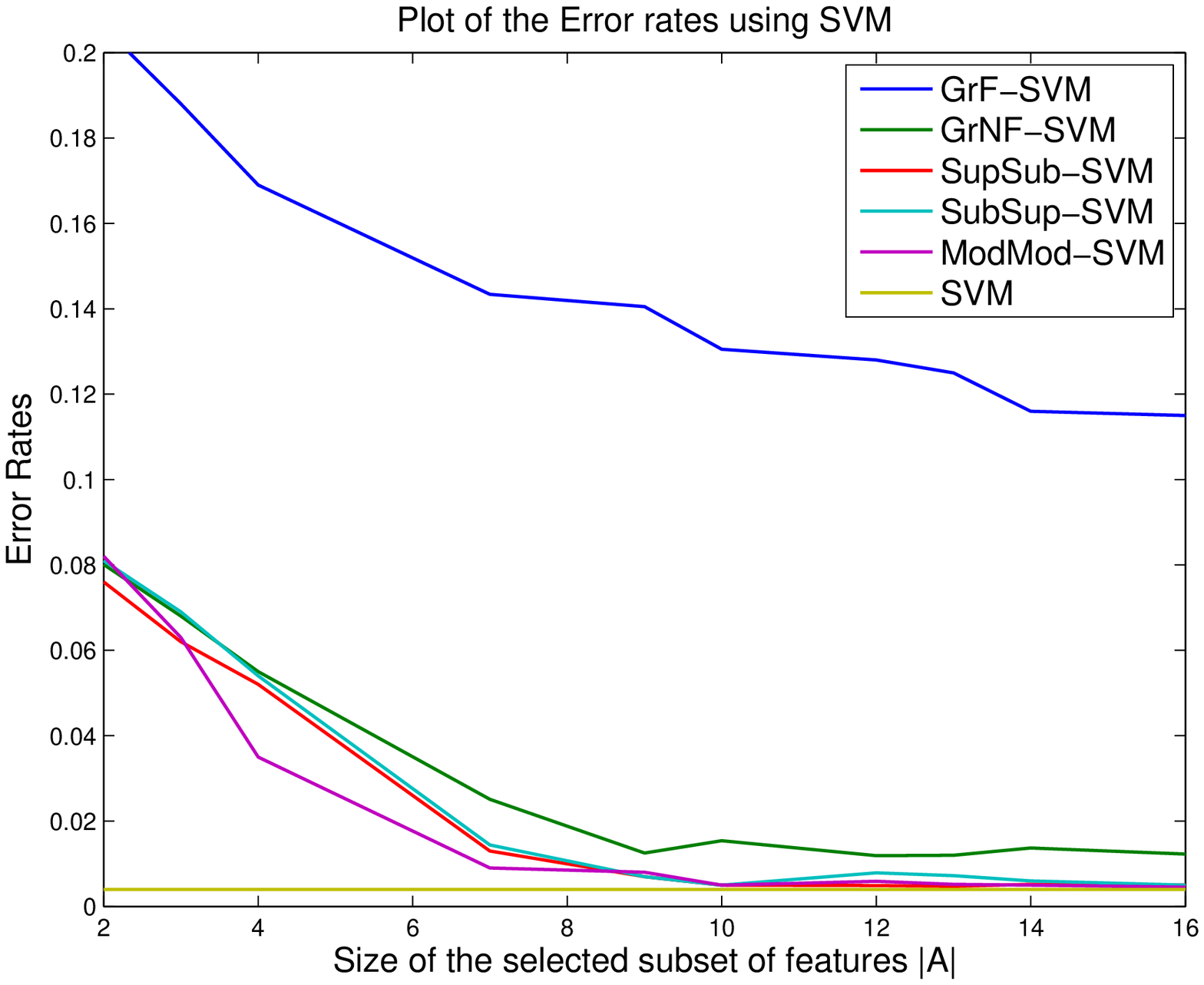}}
  ~ 
  \subfloat[NB]{\label{fig:mushroomsnb}\includegraphics[width=0.23\textwidth]{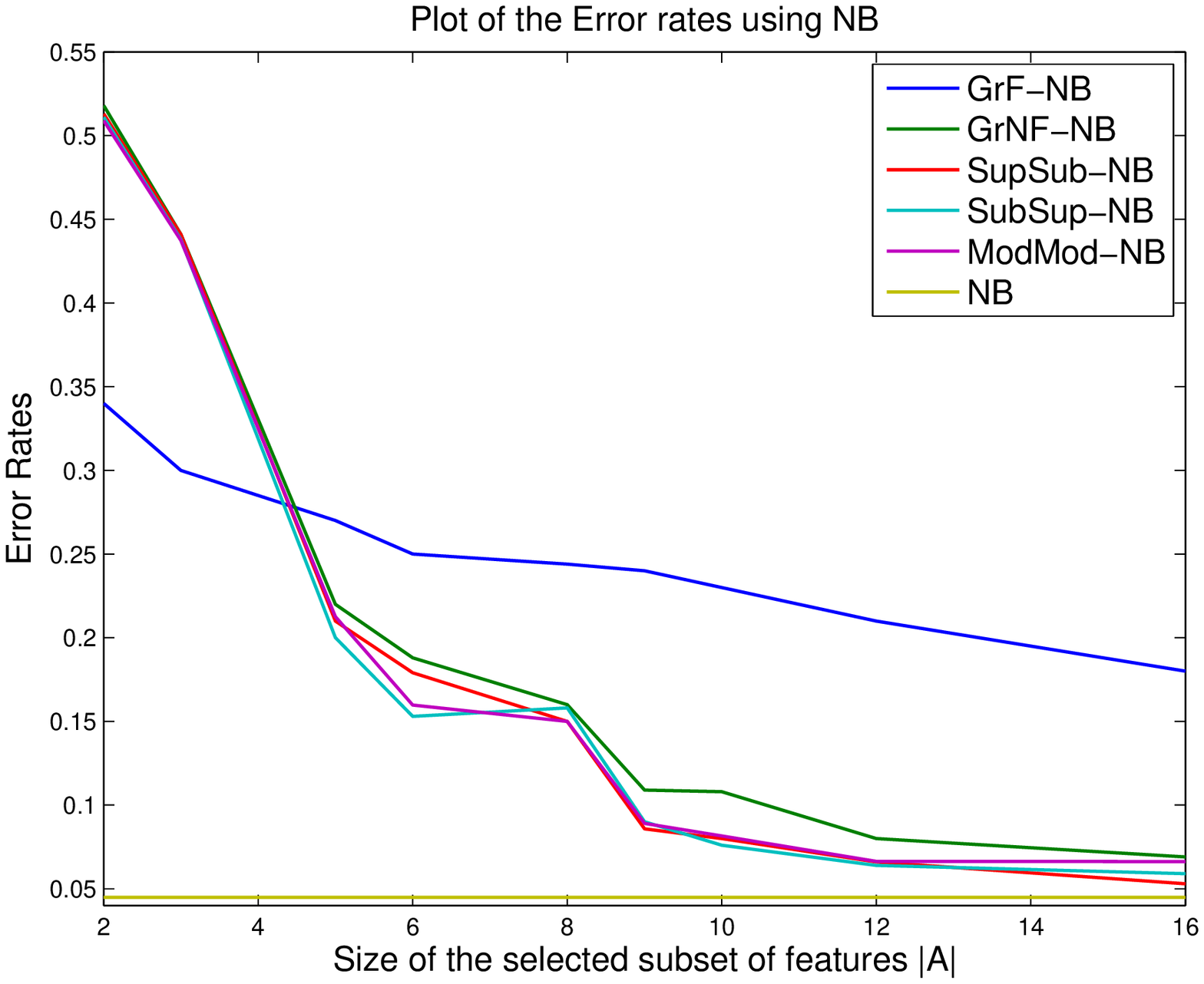}}
  ~ 
  \caption{Plot showing the accuracy rates vs.\ the number of features on the Mushroom data set.}
  \label{fig:mushrooms}
\end{figure}

\section{Experiments}


We test our algorithms on the feature subset selection problem in the
supervised setting. Given a set of features $X_V = \{X_1, X_2, \cdots,
X_{|V|}\}$, we try to find a subset of these features $A$ which has
the most information from the original set $X_V$ about a class
variable $C$ under constraints on the size or cost of $A$. 
Normally the number of features $|V|$ is quite large and thus the
training and testing time depend on $|V|$. In many cases, however,
there is a strong correlation amongst features and not every feature
is novel. We can thus perform training and testing with a much smaller
number of features $|A|$ while obtaining (almost) the same error
rates. \looseness-1

The question is how to find the most representative set of features
$A$. The mutual information between the chosen set of features and the
target class $C$, $I(X_A; C)$, captures the relevance of the chosen
subset of features. In most cases the selected features are not
independent given the class $C$ so the na\"{\i}ve Bayes assumption is
not applicable, meaning this is not a pure submodular optimization
problem. As mentioned in Section~\ref{sec:introduction}, $I(X_A; C)$
can be exactly expressed as a difference between submodular functions
$H(X_A)$ and $H(X_A | C)$.


\subsection{Modular Cost Feature Selection}
In this subsection, we look at the problem of maximizing $I(X_A; C) -
\lambda |A|$, as a regularized feature subset selection problem. Note
that a mutual information $I(X_A; C)$ query can easily be estimated
from the data by just a single sweep through this data. Further we
have observed that using techniques such as Laplace smoothing helps to
improve mutual information estimates without increasing
computation. In these experiments, therefore, we estimate
the mutual information directly from the data and run our algorithms
to find the representative subset of features.

We compare our algorithms on two data sets, i.e., the Mushroom data
set~\cite{iba1988trading} and the Adult data
set~\cite{kohavi1996scaling} obtained
from~\cite{Frank+Asuncion:2010}. The Mushroom data set has 8124
examples with 112 features, while the Adult data set has 32,561
examples with 123 features. In our experiments we considered subsets
of features of sizes between 5\%-20\% of the total number of
features
by varying $\lambda$. 
We tested the following algorithms for the feature
subset selection problem. We considered two formulations of the mutual
information, one under na\"{\i}ve Bayes, where the conditional entropy
$H(X_A | C)$ can be written as $H(X_A | C) = \sum_{j \in A} H(X_i |
C)$
and another where we do not assume such factorization.
We call these two formulations {\em factored} and {\em non-factored}
respectively. We then considered the simple greedy algorithm, of
iteratively adding features at every step to the factored and non-factored mutual
information, which we call GrF and
GrNF respectively. Lastly, we use the new algorithms presented in this paper on the
non-factored mutual information.

We then compare the results of the greedy algorithms with those of the
three algorithms for this problem, using two pattern classifiers
based on either a linear kernel 
SVM (using~\cite{libsvm}) or a na\"{\i}ve Bayes (NB) classifier. We
call the results obtained from the supermodular-submodular heuristic
as ``SupSub'', the submodular-supermodular
procedure~\cite{narasimhanbilmes} as ``SubSup'', and the
modular-modular objective as ``ModMod.'' In the SubSup procedure, we
use the minimum norm point algorithm~\cite{fujishige2011submodular}
for submodular minimization, and in the SubSup procedure, we use the
optimal algorithm of~\cite{feldman2012optimal} for submodular
maximization. We observed that the three heuristics generally
outperformed the two greedy procedures, and also that GRF can perform
quite poorly, thus justifying our claim that the na\"{\i}ve Bayes
assumption can be quite poor. This also shows that although the greedy
algorithm in that case is optimal, the features are correlated given
the class and hence modeling it as a difference between submodular
functions gives the best results. We also observed that the SupSub and
ModMod procedures perform comparably to the SubSup procedure, while
the SubSup procedure is \emph{much} slower in practice. Comparing the
running times, the ModMod and the SupSub procedure are each a few
times slower then the greedy algorithm (ModMod is slower due computing
the modular semigradients), while the SubSup procedure is around 100
times slower.  The SubSup procedure is slower due to general
submodular function minimization which can be quite slow.\looseness-1

The results for the Mushroom data set are shown in
Figure~\ref{fig:mushrooms}. We performed a 10 fold cross-validation on
the entire data set and observed that when using all the features SVM
gave an accuracy rate of 99.6\% while the all-feature NB model
had an accuracy rate of 95.5\%.  The results for the Adult database are in Figure~\ref{fig:adult}. In
this case with the entire set of features the accuracy rate of SVM on
this data set is 83.9\% and NB is 82.3\%. 
\begin{figure}[h]
  \centering
  \subfloat[SVM]{\label{fig:adultsvm}\includegraphics[width=0.23\textwidth]{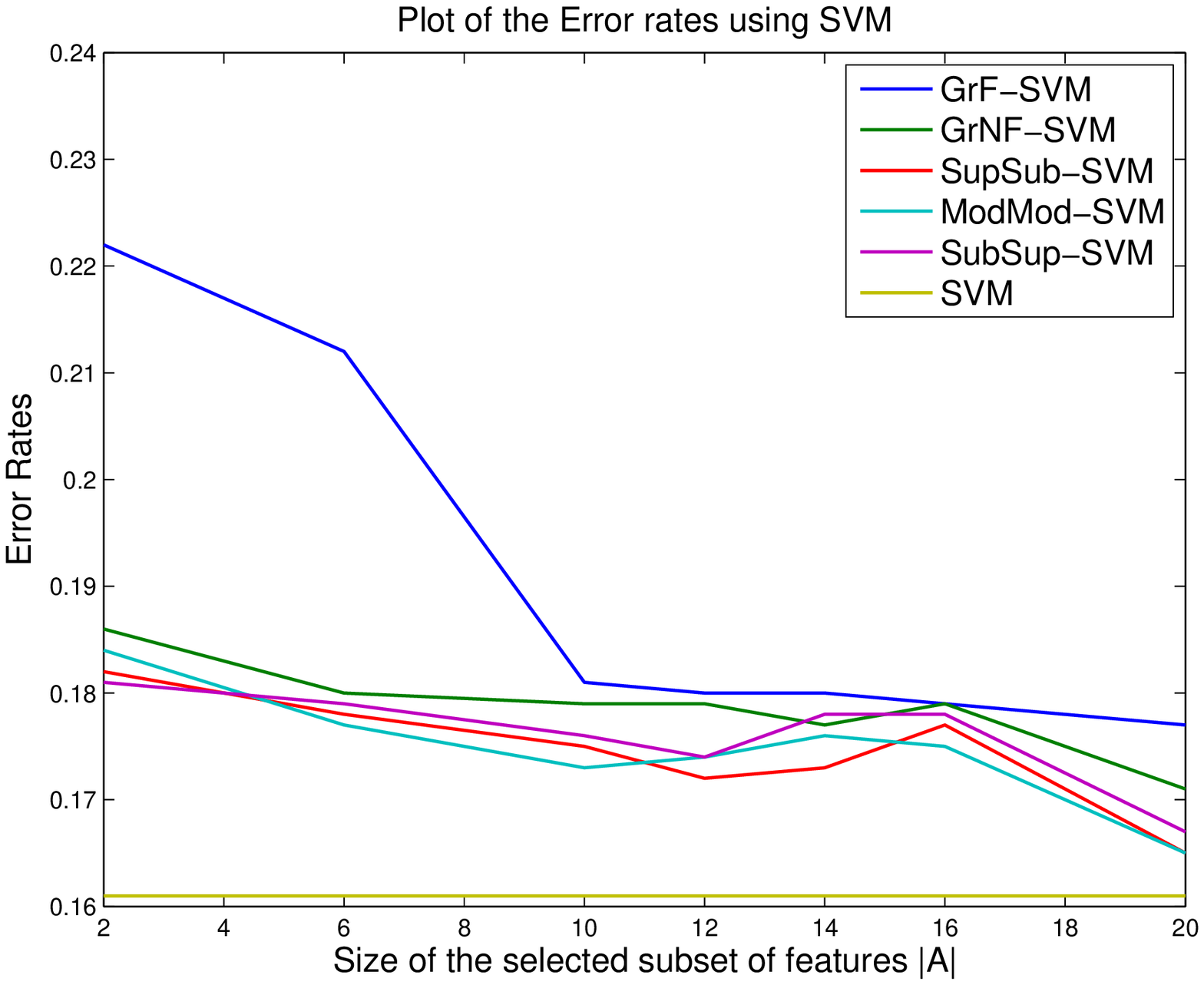}}
  ~ 
  \subfloat[NB]{\label{fig:adultnb}\includegraphics[width=0.23\textwidth]{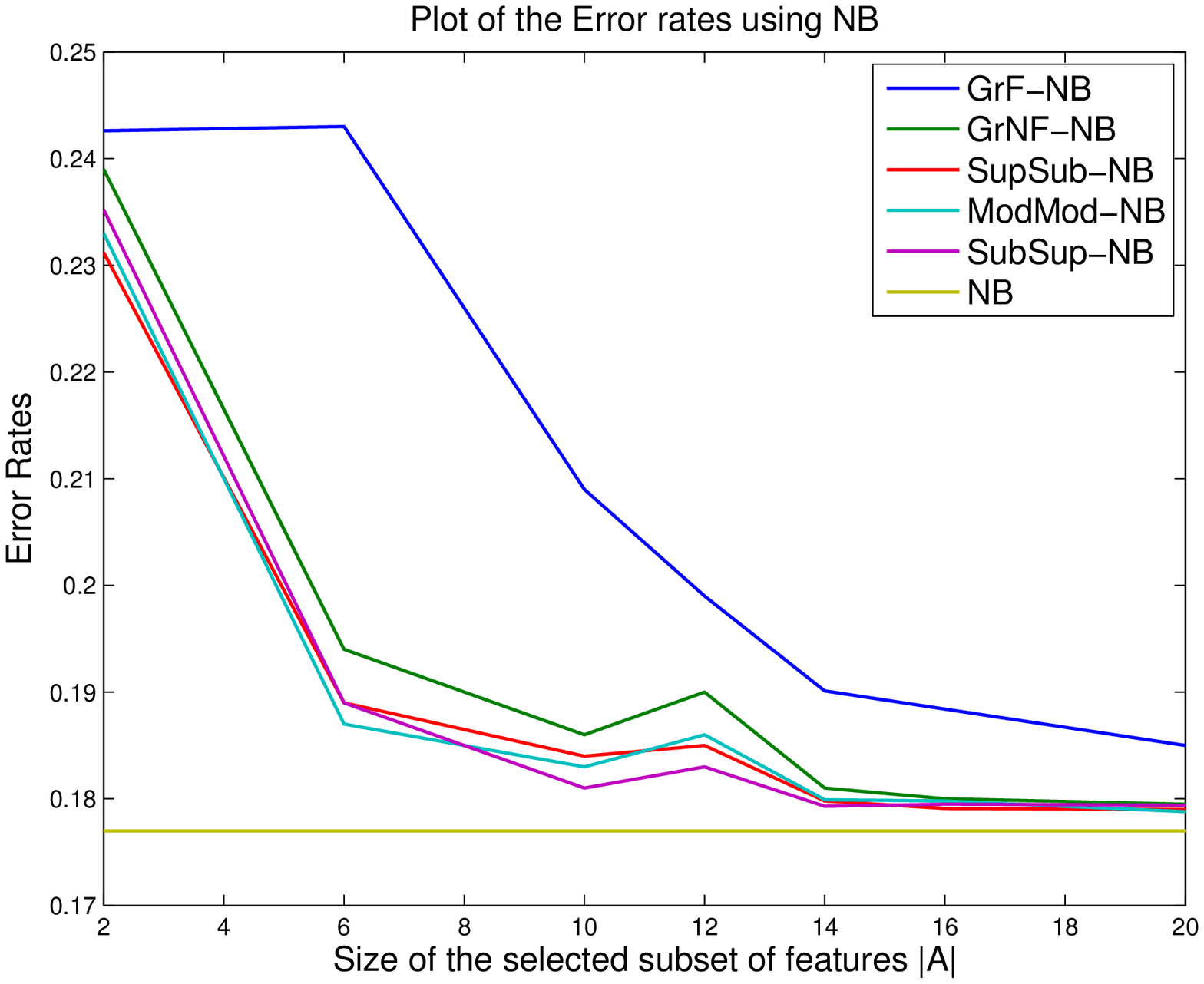}}
  ~ 
  \caption{Plot showing the accuracy rates vs.\ the number of features on the Adult data set.}
  \label{fig:adult}
\end{figure}

In the mushroom data, the SVM classifier significantly outperforms the
NB classifier and correspondingly GrF performs much worse than the
other algorithms. Also, in most cases the three algorithms outperform GrNF. In the adult data set, both the SVM and NB perform comparably
although SVM outperforms NB. However in this case also we observe that our algorithms generally outperform GrF and GrNF.



\subsection{Submodular cost feature selection}

\begin{figure}[h]
  \centering
  \subfloat[SVM]{\label{fig:mushroomssvmsub}\includegraphics[width=0.23\textwidth]{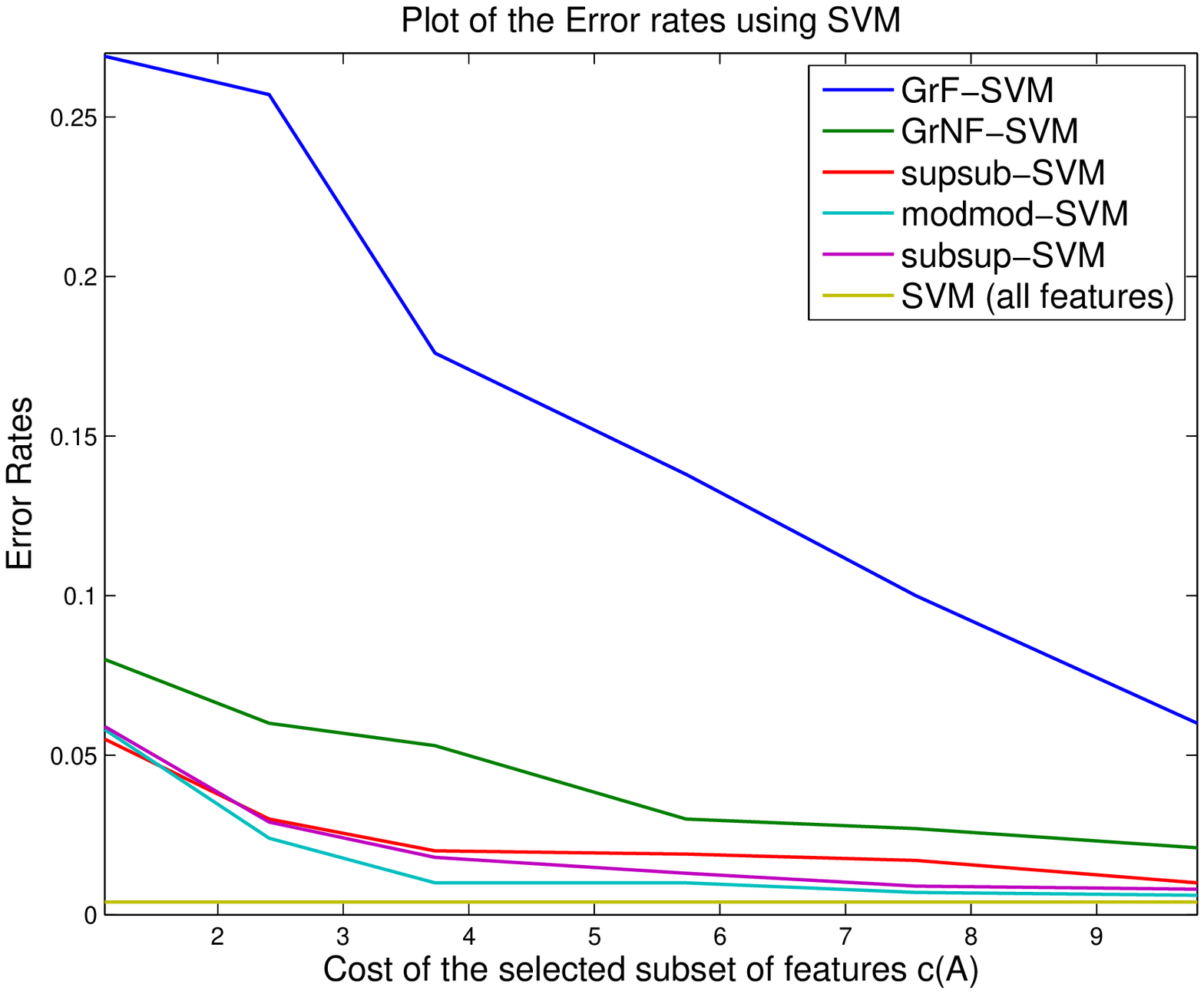}}
  ~ 
  \subfloat[NB]{\label{fig:mushroomsnbsub}\includegraphics[width=0.23\textwidth]{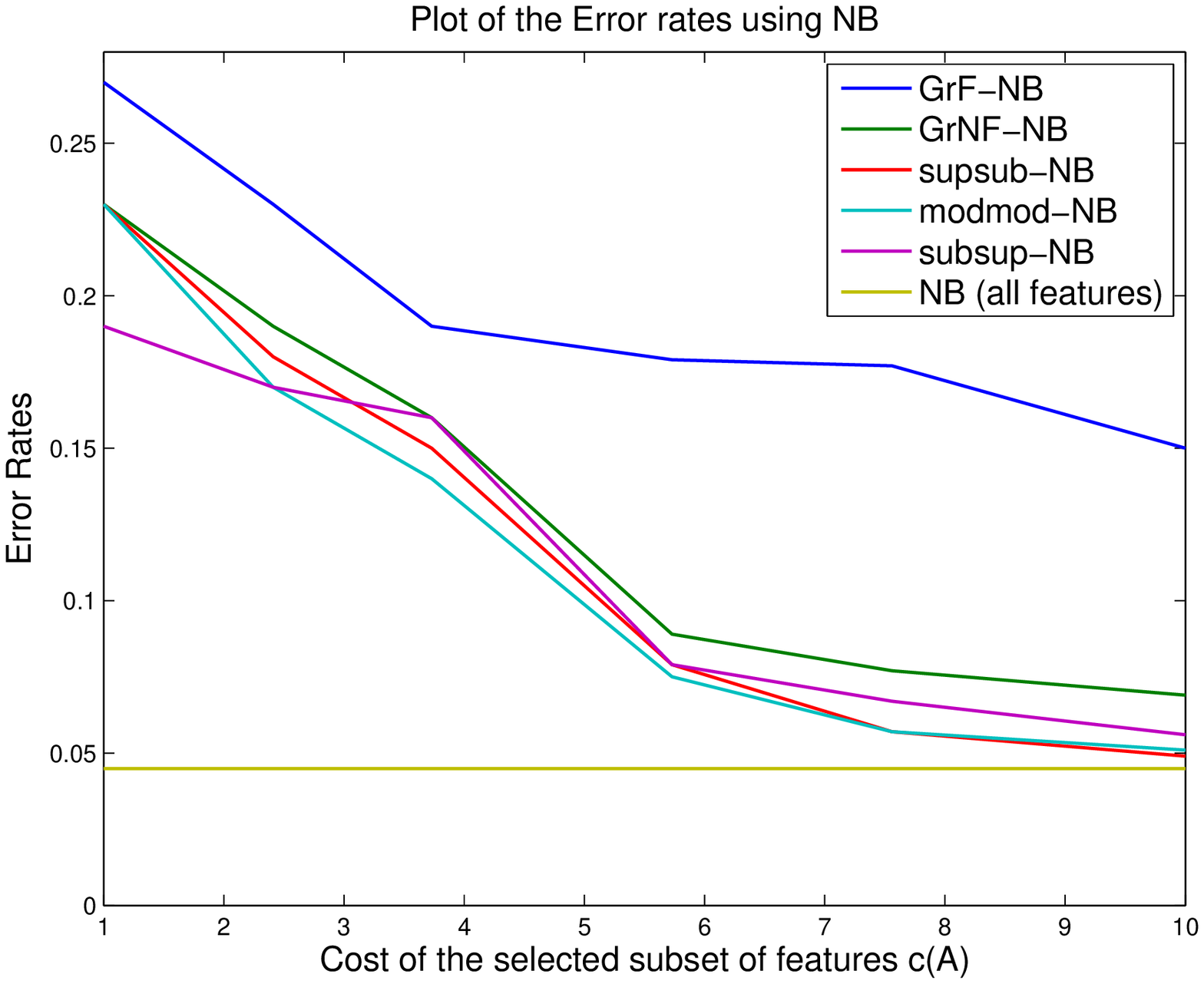}}
  ~ 
  \caption{Plot showing the accuracy rates vs.\ the cost of features for the Mushroom data set}
  \label{fig:mushroomsub}
\end{figure}
\begin{figure}[h]
  \centering
  \subfloat[SVM]{\label{fig:adultsvmsub}\includegraphics[width=0.23\textwidth]{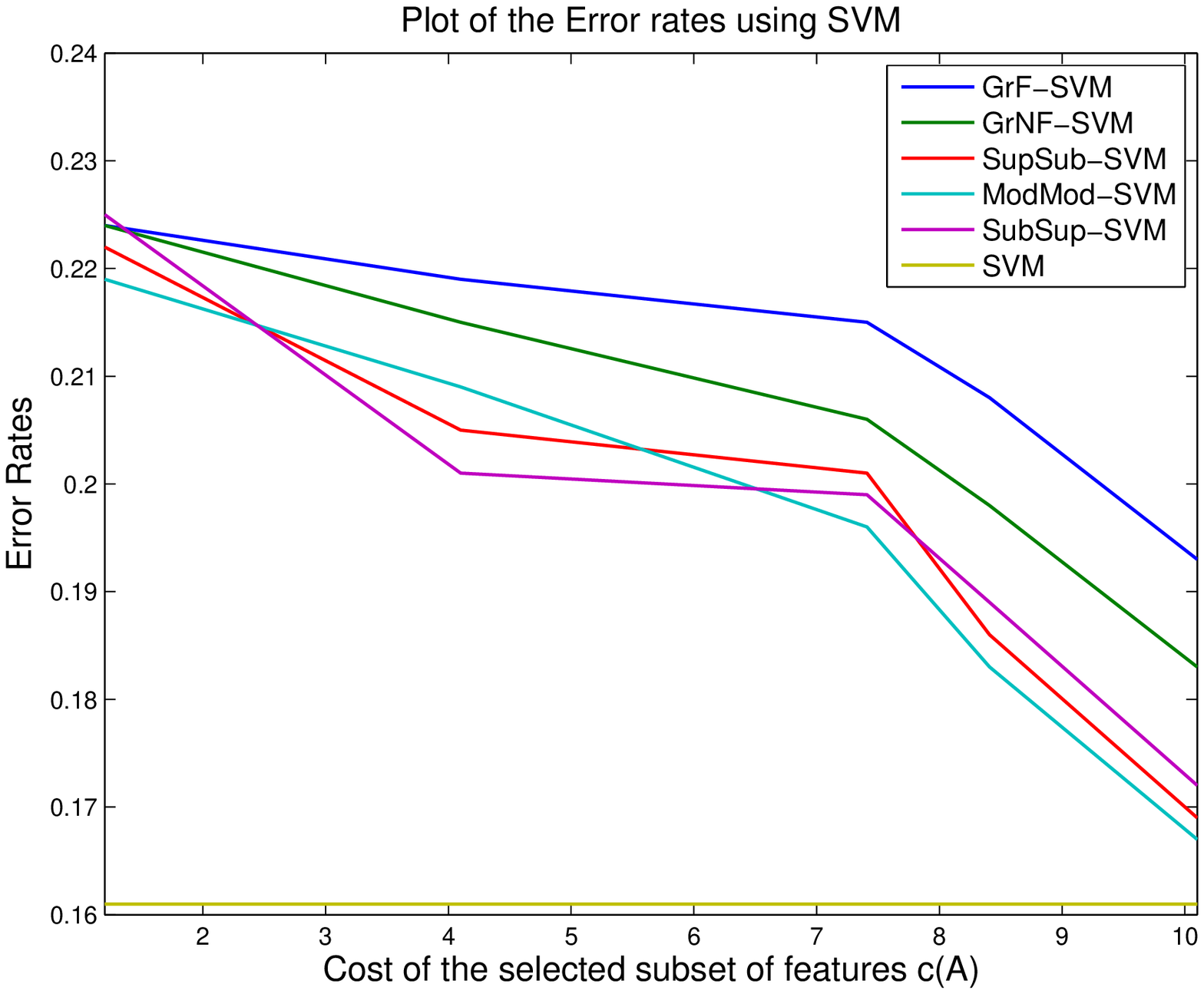}}
  ~ 
  \subfloat[NB]{\label{fig:adultnbsub}\includegraphics[width=0.23\textwidth]{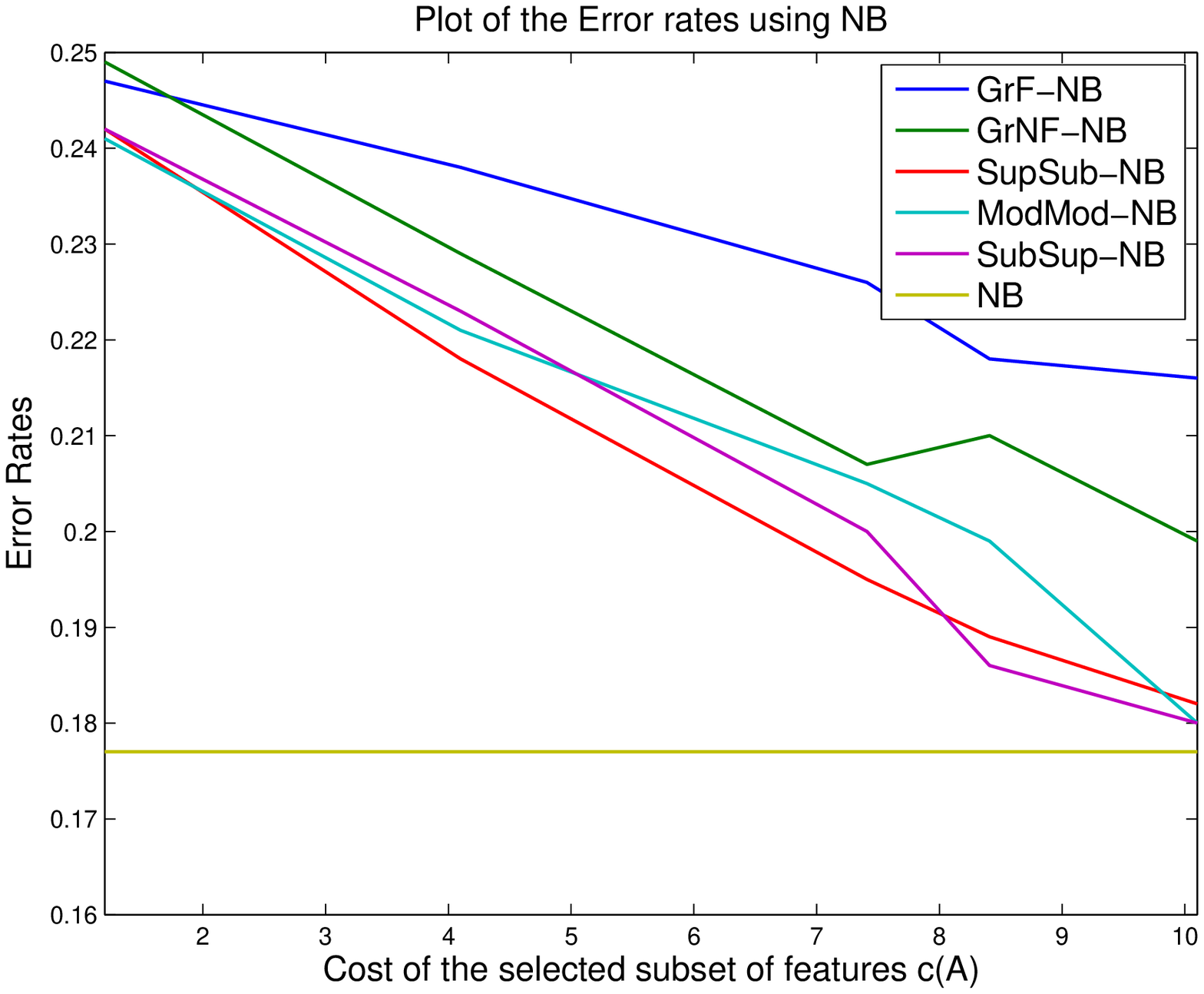}}
  ~ 
  \caption{Plot showing the accuracy rates vs.\ the cost of features for the Adult data set}
  \label{fig:adultsub}
\end{figure}

We perform synthetic experiments for the feature subset selection
problem under submodular costs. The cost model we consider is $c(A) =
\sum_i \sqrt{m(A \cap S_i)}$. We partitioned $V$ into sets $\{ S_i
\}_i$ and chose the modular function $m$ randomly.  In this set of
experiments, we compare the accuracy of the classifiers vs.\ the {\em
  cost} associated with the choice of features for the algorithms.
Recall, with simple (modular) cardinality costs the greedy algorithms
performed decently in comparison to our algorithms in the adult data
set, where the NB assumption is reasonable. However with submodular
costs, the objective is no longer submodular even under the NB
assumption and thus the greedy algorithms perform much worse.  This is
unsurprising since the greedy algorithm is approximately optimal only
for monotone submodular functions. This is even more strongly evident
from the results of the mushrooms data-set
(Figure~\ref{fig:mushroomsub})

\section{Discussion}
\label{sec:conclusions}

We have introduced new algorithms for optimizing the difference
between two submodular functions, provided new theoretical
understanding that provides some justification for heuristics, have
outlined applications that can make use of our procedures, and have
tested in the case of feature selection with modular and submodular
cost features.  Our new ModMod procedure is fast at each iteration and
experimentally
does about as well as the SupSub and SubSup
procedures. The ModMod procedure, moreover, can also be used under
various combinatorial constraints, and therefore the
ModMod procedure may hold the greatest promise as a practical heuristic.
An alternative approach, not yet evaluated, would be
to try the convex-concave procedure~\cite{yuille2002concave} on the
\lovasz{} extensions of $f$ and $g$ since subgradients in such case
are so easy to obtain.\looseness-1

{\bf Acknowledgments:} We thank Andrew Guillory, Manas Joglekar,
Stefanie Jegelka, and the rest of the submodular group at UW for
discussions.
This material is based upon work supported by the National Science
Foundation under Grant No. (IIS-1162606), and is also supported by a
Google, a Microsoft, and an Intel research award.
\looseness-1

\bibliographystyle{plainnat}
\bibliography{../Combined_bib/submod}
\end{document}